\documentclass[lettersize,journal]{IEEEtran}
\usepackage{xcolor,soul,framed,caption} 

\usepackage[dvipsnames,svgnames,x11names]{xcolor}

\usepackage[pdftex]{graphicx}
\graphicspath{{../pdf/}{../jpeg/}}
\DeclareGraphicsExtensions{.pdf,.jpeg,.png}

\usepackage{enumitem}

\usepackage[cmex10]{amsmath}

\usepackage{array}
\usepackage{mdwmath}
\usepackage{mdwtab}
\usepackage{eqparbox}
\usepackage{url}
\usepackage{supertabular}                        \usepackage[colorlinks=true,
    linkcolor=blue,
    urlcolor=blue,
    citecolor=blue]{hyperref}    
\usepackage{amsmath,amssymb}

\usepackage{amsthm}
\newtheorem{proposition}{Proposition}

\usepackage{caption}
\usepackage{subcaption}
\usepackage{float}
\usepackage{booktabs}
\usepackage{multirow}
\usepackage{graphicx}
\usepackage{color}
\usepackage{tabularray}
\usepackage{hhline}
\usepackage{wasysym}
\usepackage{algorithm}
\usepackage{algorithmic}
\usepackage{tikz}
\usetikzlibrary{positioning,fit,shapes.misc,arrows.meta,calc}

\usepackage{tikz}
\usetikzlibrary{shapes,arrows.meta,positioning,fit,calc}

\usepackage{lipsum}

\usepackage{setspace}

\usepackage{titlesec}
\titlespacing*{\section}{0pt}{0.5\baselineskip}{0.3\baselineskip}
\titlespacing*{\subsection}{0pt}{0.4\baselineskip}{0.2\baselineskip}
\titlespacing*{\subsubsection}{0pt}{0.3\baselineskip}{0.1\baselineskip}

\makeatletter
\def\@IEEENORMtitlevspace{1\baselineskip}
\def\@IEEEMINtitlevspace{0.3\baselineskip}  
\makeatother


\begin{document}

\bstctlcite{IEEEexample:BSTcontrol}

\title{Constrained Federated Reinforcement Learning for Secure Offloading in Vehicular Edge Computing}

\author{%
  Zhiyu~Wang,~Muhammed~F.~Esgin,~and~Mohammad~Goudarzi%
  \textsuperscript{*}\\[0.5em]
  {\normalsize
    Faculty of Information Technology, Monash University, Australia%
  }%
  \thanks{%
    \textsuperscript{*}Corresponding author\newline
    Emails: zhiyu.wang@monash.edu;
    muhammed.esgin@monash.edu;
    mohammad.goudarzi@monash.edu.
  }%
}

\maketitle
\IEEEaftertitletext{\vspace{-1.5em}}

\begin{abstract}
Vehicular Edge Computing (VEC) reduces latency for computation-intensive vehicular applications by offloading tasks to roadside edge servers. Real vehicular workloads, however, consist of dependent subtasks with stringent reliability needs, while the cross-region collaborative learning used to adapt offloading policies must stay robust to heterogeneous data and secure over open wireless channels. We propose CRAFT, a Constrained Reinforcement-based Adaptive Federated Task Offloading framework that casts Directed Acyclic Graph (DAG)-structured offloading and resource allocation as a Constrained Markov Decision Process. CRAFT couples centralized-critic policy learning with Lagrangian relaxation to explicitly regulate the long-term task-completion constraint, and uses two-tier federated aggregation with proximal regularization to jointly mitigate non-Independent and Identically Distributed (non-IID) drift and stabilize constraint adaptation. Confidentiality and integrity for federated parameter exchange are provided by post-quantum homomorphic encryption and lattice-based signatures with modest amortized overhead. On a Virtual Machine-based testbed over a realistic Melbourne Central Business District network with real computation kernels, CRAFT cuts weighted cost by up to 42\% over state-of-the-art baselines, maintains a completion rate of around 96\%, and adds 6.4\% security overhead.
\end{abstract}

\begin{IEEEkeywords}
Vehicular edge computing, task offloading, multi-agent reinforcement learning, federated learning, post-quantum cryptography.
\end{IEEEkeywords}

\IEEEpeerreviewmaketitle

\section{Introduction}
\label{sec:introduction}

The rapid advancement of connected and autonomous vehicles has given rise to computation-intensive applications such as autonomous driving perception, augmented reality navigation, and in-vehicle voice assistants~\cite{gong2023edge_its, katare2023edge_ad}, all imposing stringent real-time computing requirements. Constrained by the limited computing resources and battery capacity of onboard platforms, relying solely on local execution can hardly satisfy these demands. Vehicular Edge Computing (VEC), which offloads tasks to Edge Servers (ESs) deployed on Roadside Units (RSUs), offers a promising solution. However, VEC task offloading remains extremely challenging due to rapid channel variations caused by vehicle mobility, an exponentially growing joint decision space~\cite{nguyen2020deep}, and the need to balance conflicting metrics including latency, energy consumption, and task completion rate.

Beyond these general challenges, four deeper technical difficulties persist. First, real vehicular applications such as autonomous driving perception pipelines inherently comprise multiple Directed Acyclic Graph (DAG)-structured processing stages with data-flow precedence constraints, requiring subtask-level parallel scheduling along the critical path~\cite{chen2024e2ead}. Second, task completion rate is a safety-critical metric that demands formal long-term guarantees~\cite{3gpp22886, iso26262}. Third, vehicles independently trained within isolated edge regions inevitably converge to region-specific strategies, causing severe policy mismatch when vehicles cross coverage boundaries. Federated Learning (FL) offers a natural solution for cross-region knowledge sharing~\cite{zhu2024flight,liu-ccgrid26}, but the significant regional differences in traffic density, task composition, and vehicle populations give rise to strongly heterogeneous data distributions that severely compromise aggregation quality and learning stability~\cite{lim2020federated}. Fourth, the cross-region federated collaboration introduced for the third challenge is itself the source of a new security problem: parameters and training signals must be repeatedly exchanged over open Vehicle-to-Infrastructure (V2I) channels that are susceptible to eavesdropping and tampering~\cite{yoshizawa2023survey}, risking confidentiality leakage of vehicles' behavioral patterns (not only to external eavesdroppers but also to the honest-but-curious edge servers that process the parameters), as well as integrity violations that propagate through aggregation to all participants. Moreover, as long-lifecycle assets, vehicles face harvest-now-decrypt-later threats~\cite{bhargavan2024formal}, and with major providers including Google\footnote{\url{https://cloud.google.com/security/resources/post-quantum-cryptography}} and Cloudflare\footnote{\url{https://blog.cloudflare.com/post-quantum-roadmap/}} migrating to Post-Quantum Cryptography (PQC) by 2029, post-quantum resilience is imperative from the outset.

Addressing these challenges calls for an offloading policy that adapts online to dynamic mobility and channel conditions without relying on accurate system models. Deep Reinforcement Learning (DRL), which learns such policies directly through environment interaction, is naturally suited to this setting \cite{wang2025tf}, while a multi-agent formulation is needed to avoid the exponentially growing joint action space that single-agent DRL incurs in multi-vehicle multi-server scenarios \cite{nguyen2020deep}.

To this end, we propose CRAFT (Constrained Reinforcement-based Adaptive Federated Task offloading), a secure federated Multi-Agent Reinforcement Learning (MARL) framework that models DAG-structured offloading in VEC as a Constrained Markov Decision Process (CMDP) and addresses the four challenges through four cooperating components: (i) a Centralized Training with Decentralized Execution (CTDE)-based multi-agent module with Actors on vehicles for offloading decisions and Critics on ESs for value estimation; (ii) a Lagrangian relaxation module that folds the long-term completion-rate constraint into the reward via online dual updates; (iii) a two-tier federated Actor aggregation module whose proximal regularization simultaneously curbs non-Independent and Identically Distributed (non-IID) drift and damps the policy oscillations from Lagrangian updates; and (iv) a post-quantum security module combining Cheon-Kim-Kim-Song (CKKS)~\cite{cheon2017ckks} Homomorphic Encryption (HE) for confidentiality with Module-Lattice Digital Signature Algorithm (ML-DSA) (a.k.a. Dilithium)~\cite{ducas2018crystals, fips204} signatures for integrity over open V2I channels.

The main contributions are summarized as follows:
\begin{itemize}
    \item We formulate DAG-structured VEC offloading capturing subtask structure, dependency, and heterogeneity, and model it as a CMDP with a long-term task completion rate constraint.
    \item We design a CTDE-based multi-agent constrained policy optimization scheme with Lagrangian relaxation and projected dual updates, and prove that the resulting long-term constraint violation decreases at an $O(1/\sqrt{J})$ rate up to a residual term induced by inexact neural policy optimization, where $J$ is the total number of Lagrangian multiplier-update rounds.
    \item We propose a two-tier federated aggregation with a dual-role proximal regularization mechanism that jointly handles non-IID distributions and stabilizes constrained optimization, with confidentiality and integrity secured by post-quantum CKKS and ML-DSA, respectively.
    \item We build a Virtual Machine (VM)-based testbed with real vehicular workloads; on a realistic Melbourne Central Business District (CBD) network, CRAFT outperforms three state-of-the-art baselines by about 16\%, 20\%, and 42\% in weighted cost while holding a 96\% task completion rate near the constraint threshold.
\end{itemize}

The remainder of this paper covers related work and research gaps (Section~\ref{sec:related_work}), the system model and problem formulation (Section~\ref{sec:system_model}), the CRAFT framework with its convergence/stability and complexity analysis (Section~\ref{sec:algorithm}), experimental results (Section~\ref{sec:experiments}), and conclusions (Section~\ref{sec:conclusion}).

\section{Related Work}
\label{sec:related_work}

\subsection{Task Offloading in Vehicular Edge Computing}

Model-based optimization methods derive theoretically optimal offloading solutions through rigorous mathematical formulations, predominantly via game theory. Sun~\emph{et~al.}~\cite{10024868} combined a Bargaining Game for intra-server allocation with Matching-based inter-server offloading to maximize utility, Fan~\emph{et~al.}~\cite{10035469} adopted a Potential Game to minimize processing delay, while Yin~\emph{et~al.}~\cite{10414410} modeled the system as a multi-stage Stackelberg Game to maximize tripartite benefits. Several works further couple optimization with security: Xiao~\emph{et~al.}~\cite{xiao2023joint} integrated block-cipher encryption into a Generalized Benders Decomposition (GBD) scheme, Yang~\emph{et~al.}~\cite{10745593} secured a pricing-based Stackelberg Game for platoon offloading via Advanced Encryption Standard (AES), and Liang~\emph{et~al.}~\cite{10969549} adopted a blockchain framework with game-theoretic pricing. However, these methods rely on accurate model assumptions and adapt poorly to rapid mobility and time-varying channels~\cite{zabihi2023reinforcement, wang2024deep}.

Reinforcement Learning (RL) enables online adaptive decision-making without accurate system models. Qiu~\emph{et~al.}~\cite{10517901}, Fan~\emph{et~al.}~\cite{10400987}, and Ju~\emph{et~al.}~\cite{ju2024eeso} respectively applied Deep Deterministic Policy Gradient (DDPG), Twin Delayed Deep Deterministic Policy Gradient (TD3), and Asynchronous Advantage Actor-Critic (A3C) to joint offloading and resource/power allocation, with the last leveraging Physical Layer Security (PLS) against eavesdroppers. However, such Single-Agent RL (SARL) methods suffer from exponentially growing joint action spaces in multi-vehicle multi-server settings, severely limiting scalability~\cite{nguyen2020deep}.

Multi-Agent RL (MARL) mitigates this by assigning each vehicle a learning agent. Abbas~\emph{et~al.}~\cite{10779474} proposed DPTORA based on the Generalist Pursuit Learning Algorithm (GPLA) for flexible partial offloading across RSUs. Several Multi-Agent DDPG (MADDPG)-based schemes target different objectives: Xu~\emph{et~al.}~\cite{xu2023joint} maximized task completion rate, Fofana~\emph{et~al.}~\cite{10598376} balanced time, cost, and success probability, and Xiao~\emph{et~al.}~\cite{11536917} embedded PLS into the reward for secure offloading. Wei~\emph{et~al.}~\cite{10056271} adopted Multi-Agent Soft Actor-Critic (MASAC) to optimise latency, while Ju~\emph{et~al.}~\cite{ju2023sora} used a PLS-aware Multi-Agent Double Deep Q-Network (MADDQN) to minimize processing delay against eavesdroppers. More recently, FL has been integrated for collaborative knowledge sharing: Wu~\emph{et~al.}~\cite{10720029} studied auxiliary vehicle resource exploitation, Lei~\emph{et~al.}~\cite{11085134} proposed Fed-MADDPG for joint scheduling without state sharing, and Zeng~\emph{et~al.}~\cite{11096732} proposed a Multi-Agent Proximal Policy Optimization (MAPPO)-based method minimizing average delay.

\subsection{A Qualitative Comparison}
In what follows, we systematically identify existing works' limitations and position the CRAFT contributions.

\subsubsection{Comparative Analysis Dimensions}
Table~\ref{tab:related_works} compares CRAFT with existing works across several dimensions. \emph{Task Properties} captures workload complexity via \emph{Subtasks} (single unit vs.\ multiple subtasks), \emph{Dependency} (inter-subtask execution order), and \emph{Heterogeneity} (diverse computational characteristics). \emph{Methodology} covers \emph{Paradigm} (optimization, single-agent, or multi-agent RL), \emph{Algorithm} (the concrete technique), \emph{Constraint} (a formal mechanism for long-term guarantees), and \emph{FL} (federated learning for cross-region sharing). \emph{Non-IID} marks explicit handling of heterogeneous cross-region data. \emph{Security} spans \emph{Confidentiality}, \emph{Integrity} (cryptographic integrity/authenticity verification), and \emph{PQC} (use of PQC to protect the exchanged/aggregated parameters). Confidentiality is further split by adversary into \emph{vs.\ Eaves.} (network eavesdroppers) and \emph{vs.\ ES} (the honest-but-curious edge server that can access the plaintext). \emph{Optimization Objectives} lists targeted metrics (latency, energy, completion rate, multi-objective). \emph{Evaluation} distinguishes pure simulation from real-workload testbeds.

\renewcommand{\arraystretch}{1.1}
\begin{table*}[t]
\centering
\caption{Qualitative comparison of CRAFT with existing related works.}
\label{tab:related_works}
\resizebox{\textwidth}{!}{%
\begin{tabular}{cccccccccccccccccc}
\hline
\multicolumn{1}{|c|}{\multirow{3}{*}{\textbf{Work}}} &
\multicolumn{3}{c|}{\textbf{Task Properties}} &
\multicolumn{4}{c|}{\textbf{Methodology}} &
\multicolumn{1}{c|}{\multirow{3}{*}{\textbf{Non-IID}}} &
\multicolumn{4}{c|}{\textbf{Security}} &
\multicolumn{4}{c|}{\textbf{Optimization Objectives}} &
\multicolumn{1}{c|}{\multirow{3}{*}{\textbf{Evaluation}}} \\
\cline{2-4} \cline{5-8} \cline{10-13} \cline{14-17}
\multicolumn{1}{|c|}{} &
\multicolumn{1}{c|}{\multirow{2}{*}{\textbf{Subtasks}}} &
\multicolumn{1}{c|}{\multirow{2}{*}{\textbf{Dependency}}} &
\multicolumn{1}{c|}{\multirow{2}{*}{\textbf{Heterogeneity}}} &
\multicolumn{2}{c|}{\textbf{Main Technique}} &
\multicolumn{1}{c|}{\multirow{2}{*}{\textbf{Constraint}}} &
\multicolumn{1}{c|}{\multirow{2}{*}{\textbf{FL}}} &
\multicolumn{1}{c|}{} &
\multicolumn{2}{c|}{\textbf{Confidentiality}} &
\multicolumn{1}{c|}{\multirow{2}{*}{\textbf{Integrity}}} &
\multicolumn{1}{c|}{\multirow{2}{*}{\textbf{PQC}}} &
\multicolumn{1}{c|}{\multirow{2}{*}{\textbf{Latency}}} &
\multicolumn{1}{c|}{\multirow{2}{*}{\textbf{Energy}}} &
\multicolumn{1}{c|}{\multirow{2}{*}{\begin{tabular}[c]{@{}c@{}}\textbf{Compl.}\\\textbf{Rate}\end{tabular}}} &
\multicolumn{1}{c|}{\multirow{2}{*}{\begin{tabular}[c]{@{}c@{}}\textbf{Multi}\\\textbf{Obj}\end{tabular}}} &
\multicolumn{1}{c|}{} \\
\cline{5-6}\cline{10-11}
\multicolumn{1}{|c|}{} &
\multicolumn{1}{c|}{} &
\multicolumn{1}{c|}{} &
\multicolumn{1}{c|}{} &
\multicolumn{1}{c|}{\textbf{Paradigm}} &
\multicolumn{1}{c|}{\textbf{Algorithm}} &
\multicolumn{1}{c|}{} &
\multicolumn{1}{c|}{} &
\multicolumn{1}{c|}{} &
\multicolumn{1}{c|}{\textbf{vs. Eaves.}} &
\multicolumn{1}{c|}{\textbf{vs. ES}} &
\multicolumn{1}{c|}{} &
\multicolumn{1}{c|}{} &
\multicolumn{1}{c|}{} &
\multicolumn{1}{c|}{} &
\multicolumn{1}{c|}{} &
\multicolumn{1}{c|}{} &
\multicolumn{1}{c|}{} \\
\hline
\multicolumn{1}{|c|}{Sun~\emph{et~al.}~\cite{10024868}} &
\multicolumn{1}{c|}{Single} &
\multicolumn{1}{c|}{$\times$} &
\multicolumn{1}{c|}{$\checkmark$} &
\multicolumn{1}{c|}{\multirow{6}{*}{Optimization}} &
\multicolumn{1}{c|}{\begin{tabular}[c]{@{}c@{}}Bargaining Game\\+ Matching\end{tabular}} &
\multicolumn{1}{c|}{$\times$} &
\multicolumn{1}{c|}{$\times$} &
\multicolumn{1}{c|}{$\times$} &
\multicolumn{1}{c|}{$\times$} &
\multicolumn{1}{c|}{$\times$} &
\multicolumn{1}{c|}{$\times$} &
\multicolumn{1}{c|}{$\times$} &
\multicolumn{1}{c|}{$\checkmark$} &
\multicolumn{1}{c|}{$\checkmark$} &
\multicolumn{1}{c|}{$\times$} &
\multicolumn{1}{c|}{$\checkmark$} &
\multicolumn{1}{c|}{Simulation} \\
\cline{1-4}\cline{6-18}
\multicolumn{1}{|c|}{Fan~\emph{et~al.}~\cite{10035469}} &
\multicolumn{1}{c|}{Single} &
\multicolumn{1}{c|}{$\times$} &
\multicolumn{1}{c|}{$\times$} &
\multicolumn{1}{c|}{} &
\multicolumn{1}{c|}{Potential Game} &
\multicolumn{1}{c|}{$\times$} &
\multicolumn{1}{c|}{$\times$} &
\multicolumn{1}{c|}{$\times$} &
\multicolumn{1}{c|}{$\times$} &
\multicolumn{1}{c|}{$\times$} &
\multicolumn{1}{c|}{$\times$} &
\multicolumn{1}{c|}{$\times$} &
\multicolumn{1}{c|}{$\checkmark$} &
\multicolumn{1}{c|}{$\times$} &
\multicolumn{1}{c|}{$\times$} &
\multicolumn{1}{c|}{$\times$} &
\multicolumn{1}{c|}{Simulation} \\
\cline{1-4}\cline{6-18}
\multicolumn{1}{|c|}{Yin~\emph{et~al.}~\cite{10414410}} &
\multicolumn{1}{c|}{Single} &
\multicolumn{1}{c|}{$\times$} &
\multicolumn{1}{c|}{$\times$} &
\multicolumn{1}{c|}{} &
\multicolumn{1}{c|}{Stackelberg Game} &
\multicolumn{1}{c|}{$\times$} &
\multicolumn{1}{c|}{$\times$} &
\multicolumn{1}{c|}{$\times$} &
\multicolumn{1}{c|}{$\times$} &
\multicolumn{1}{c|}{$\times$} &
\multicolumn{1}{c|}{$\times$} &
\multicolumn{1}{c|}{$\times$} &
\multicolumn{1}{c|}{$\checkmark$} &
\multicolumn{1}{c|}{$\checkmark$} &
\multicolumn{1}{c|}{$\times$} &
\multicolumn{1}{c|}{$\checkmark$} &
\multicolumn{1}{c|}{Simulation} \\
\cline{1-4}\cline{6-18}
\multicolumn{1}{|c|}{Xiao~\emph{et~al.}~\cite{xiao2023joint}} &
\multicolumn{1}{c|}{Single} &
\multicolumn{1}{c|}{$\times$} &
\multicolumn{1}{c|}{$\checkmark$} &
\multicolumn{1}{c|}{} &
\multicolumn{1}{c|}{GBD} &
\multicolumn{1}{c|}{$\times$} &
\multicolumn{1}{c|}{$\times$} &
\multicolumn{1}{c|}{$\times$} &
\multicolumn{1}{c|}{Block Cipher} &
\multicolumn{1}{c|}{$\times$} &
\multicolumn{1}{c|}{$\times$} &
\multicolumn{1}{c|}{$\times$} &
\multicolumn{1}{c|}{$\checkmark$} &
\multicolumn{1}{c|}{$\checkmark$} &
\multicolumn{1}{c|}{$\times$} &
\multicolumn{1}{c|}{$\checkmark$} &
\multicolumn{1}{c|}{Simulation} \\
\cline{1-4}\cline{6-18}
\multicolumn{1}{|c|}{Yang~\emph{et~al.}~\cite{10745593}} &
\multicolumn{1}{c|}{Single} &
\multicolumn{1}{c|}{$\times$} &
\multicolumn{1}{c|}{$\times$} &
\multicolumn{1}{c|}{} &
\multicolumn{1}{c|}{Stackelberg Game} &
\multicolumn{1}{c|}{$\times$} &
\multicolumn{1}{c|}{$\times$} &
\multicolumn{1}{c|}{$\times$} &
\multicolumn{1}{c|}{AES} &
\multicolumn{1}{c|}{$\times$} &
\multicolumn{1}{c|}{$\times$} &
\multicolumn{1}{c|}{$\times$} &
\multicolumn{1}{c|}{$\checkmark$} &
\multicolumn{1}{c|}{$\checkmark$} &
\multicolumn{1}{c|}{$\times$} &
\multicolumn{1}{c|}{$\checkmark$} &
\multicolumn{1}{c|}{Simulation} \\
\cline{1-4}\cline{6-18}
\multicolumn{1}{|c|}{Liang~\emph{et~al.}~\cite{10969549}} &
\multicolumn{1}{c|}{Single} &
\multicolumn{1}{c|}{$\times$} &
\multicolumn{1}{c|}{$\times$} &
\multicolumn{1}{c|}{} &
\multicolumn{1}{c|}{Stackelberg Game} &
\multicolumn{1}{c|}{$\times$} &
\multicolumn{1}{c|}{$\times$} &
\multicolumn{1}{c|}{$\times$} &
\multicolumn{1}{c|}{$\times$} &
\multicolumn{1}{c|}{$\times$} &
\multicolumn{1}{c|}{Blockchain} &
\multicolumn{1}{c|}{$\times$} &
\multicolumn{1}{c|}{$\checkmark$} &
\multicolumn{1}{c|}{$\checkmark$} &
\multicolumn{1}{c|}{$\times$} &
\multicolumn{1}{c|}{$\checkmark$} &
\multicolumn{1}{c|}{Simulation} \\
\hline
\multicolumn{1}{|c|}{Qiu~\emph{et~al.}~\cite{10517901}} &
\multicolumn{1}{c|}{Single} &
\multicolumn{1}{c|}{$\times$} &
\multicolumn{1}{c|}{$\times$} &
\multicolumn{1}{c|}{\multirow{3}{*}{SARL}} &
\multicolumn{1}{c|}{DDPG} &
\multicolumn{1}{c|}{$\times$} &
\multicolumn{1}{c|}{$\times$} &
\multicolumn{1}{c|}{$\times$} &
\multicolumn{1}{c|}{$\times$} &
\multicolumn{1}{c|}{$\times$} &
\multicolumn{1}{c|}{$\times$} &
\multicolumn{1}{c|}{$\times$} &
\multicolumn{1}{c|}{$\checkmark$} &
\multicolumn{1}{c|}{$\checkmark$} &
\multicolumn{1}{c|}{$\times$} &
\multicolumn{1}{c|}{$\checkmark$} &
\multicolumn{1}{c|}{Simulation} \\
\cline{1-4}\cline{6-18}
\multicolumn{1}{|c|}{Fan~\emph{et~al.}~\cite{10400987}} &
\multicolumn{1}{c|}{Single} &
\multicolumn{1}{c|}{$\times$} &
\multicolumn{1}{c|}{$\times$} &
\multicolumn{1}{c|}{} &
\multicolumn{1}{c|}{TD3} &
\multicolumn{1}{c|}{$\times$} &
\multicolumn{1}{c|}{$\times$} &
\multicolumn{1}{c|}{$\times$} &
\multicolumn{1}{c|}{$\times$} &
\multicolumn{1}{c|}{$\times$} &
\multicolumn{1}{c|}{$\times$} &
\multicolumn{1}{c|}{$\times$} &
\multicolumn{1}{c|}{$\checkmark$} &
\multicolumn{1}{c|}{$\checkmark$} &
\multicolumn{1}{c|}{$\times$} &
\multicolumn{1}{c|}{$\checkmark$} &
\multicolumn{1}{c|}{Simulation} \\
\cline{1-4}\cline{6-18}
\multicolumn{1}{|c|}{Ju~\emph{et~al.}~\cite{ju2024eeso}} &
\multicolumn{1}{c|}{Single} &
\multicolumn{1}{c|}{$\times$} &
\multicolumn{1}{c|}{$\checkmark$} &
\multicolumn{1}{c|}{} &
\multicolumn{1}{c|}{A3C} &
\multicolumn{1}{c|}{$\times$} &
\multicolumn{1}{c|}{$\times$} &
\multicolumn{1}{c|}{$\times$} &
\multicolumn{1}{c|}{PLS} &
\multicolumn{1}{c|}{$\times$} &
\multicolumn{1}{c|}{$\times$} &
\multicolumn{1}{c|}{$\times$} &
\multicolumn{1}{c|}{$\checkmark$} &
\multicolumn{1}{c|}{$\checkmark$} &
\multicolumn{1}{c|}{$\times$} &
\multicolumn{1}{c|}{$\checkmark$} &
\multicolumn{1}{c|}{Simulation} \\
\hline
\multicolumn{1}{|c|}{Abbas~\emph{et~al.}~\cite{10779474}} &
\multicolumn{1}{c|}{Single} &
\multicolumn{1}{c|}{$\times$} &
\multicolumn{1}{c|}{$\checkmark$} &
\multicolumn{1}{c|}{\multirow{10}{*}{MARL}} &
\multicolumn{1}{c|}{GPLA} &
\multicolumn{1}{c|}{$\times$} &
\multicolumn{1}{c|}{$\times$} &
\multicolumn{1}{c|}{$\times$} &
\multicolumn{1}{c|}{$\times$} &
\multicolumn{1}{c|}{$\times$} &
\multicolumn{1}{c|}{$\times$} &
\multicolumn{1}{c|}{$\times$} &
\multicolumn{1}{c|}{$\checkmark$} &
\multicolumn{1}{c|}{$\checkmark$} &
\multicolumn{1}{c|}{$\times$} &
\multicolumn{1}{c|}{$\checkmark$} &
\multicolumn{1}{c|}{Simulation} \\
\cline{1-4}\cline{6-18}
\multicolumn{1}{|c|}{Xu~\emph{et~al.}~\cite{xu2023joint}} &
\multicolumn{1}{c|}{Single} &
\multicolumn{1}{c|}{$\times$} &
\multicolumn{1}{c|}{$\checkmark$} &
\multicolumn{1}{c|}{} &
\multicolumn{1}{c|}{MADDPG} &
\multicolumn{1}{c|}{$\times$} &
\multicolumn{1}{c|}{$\times$} &
\multicolumn{1}{c|}{$\times$} &
\multicolumn{1}{c|}{$\times$} &
\multicolumn{1}{c|}{$\times$} &
\multicolumn{1}{c|}{$\times$} &
\multicolumn{1}{c|}{$\times$} &
\multicolumn{1}{c|}{$\times$} &
\multicolumn{1}{c|}{$\times$} &
\multicolumn{1}{c|}{$\checkmark$} &
\multicolumn{1}{c|}{$\times$} &
\multicolumn{1}{c|}{Simulation} \\
\cline{1-4}\cline{6-18}
\multicolumn{1}{|c|}{Fofana~\emph{et~al.}~\cite{10598376}} &
\multicolumn{1}{c|}{Single} &
\multicolumn{1}{c|}{$\times$} &
\multicolumn{1}{c|}{$\times$} &
\multicolumn{1}{c|}{} &
\multicolumn{1}{c|}{MADDPG} &
\multicolumn{1}{c|}{$\times$} &
\multicolumn{1}{c|}{$\times$} &
\multicolumn{1}{c|}{$\times$} &
\multicolumn{1}{c|}{$\times$} &
\multicolumn{1}{c|}{$\times$} &
\multicolumn{1}{c|}{$\times$} &
\multicolumn{1}{c|}{$\times$} &
\multicolumn{1}{c|}{$\checkmark$} &
\multicolumn{1}{c|}{$\checkmark$} &
\multicolumn{1}{c|}{$\checkmark$} &
\multicolumn{1}{c|}{$\checkmark$} &
\multicolumn{1}{c|}{Simulation} \\
\cline{1-4}\cline{6-18}
\multicolumn{1}{|c|}{Xiao~\emph{et~al.}~\cite{11536917}} &
\multicolumn{1}{c|}{Single} &
\multicolumn{1}{c|}{$\times$} &
\multicolumn{1}{c|}{$\times$} &
\multicolumn{1}{c|}{} &
\multicolumn{1}{c|}{MADDPG} &
\multicolumn{1}{c|}{$\times$} &
\multicolumn{1}{c|}{$\times$} &
\multicolumn{1}{c|}{$\times$} &
\multicolumn{1}{c|}{PLS} &
\multicolumn{1}{c|}{$\times$} &
\multicolumn{1}{c|}{$\times$} &
\multicolumn{1}{c|}{$\times$} &
\multicolumn{1}{c|}{$\checkmark$} &
\multicolumn{1}{c|}{$\checkmark$} &
\multicolumn{1}{c|}{$\times$} &
\multicolumn{1}{c|}{$\checkmark$} &
\multicolumn{1}{c|}{Simulation} \\
\cline{1-4}\cline{6-18}
\multicolumn{1}{|c|}{Wei~\emph{et~al.}~\cite{10056271}} &
\multicolumn{1}{c|}{Single} &
\multicolumn{1}{c|}{$\times$} &
\multicolumn{1}{c|}{$\times$} &
\multicolumn{1}{c|}{} &
\multicolumn{1}{c|}{MASAC} &
\multicolumn{1}{c|}{$\times$} &
\multicolumn{1}{c|}{$\times$} &
\multicolumn{1}{c|}{$\times$} &
\multicolumn{1}{c|}{$\times$} &
\multicolumn{1}{c|}{$\times$} &
\multicolumn{1}{c|}{$\times$} &
\multicolumn{1}{c|}{$\times$} &
\multicolumn{1}{c|}{$\checkmark$} &
\multicolumn{1}{c|}{$\times$} &
\multicolumn{1}{c|}{$\times$} &
\multicolumn{1}{c|}{$\checkmark$} &
\multicolumn{1}{c|}{Simulation} \\
\cline{1-4}\cline{6-18}
\multicolumn{1}{|c|}{Ju~\emph{et~al.}~\cite{ju2023sora}} &
\multicolumn{1}{c|}{Single} &
\multicolumn{1}{c|}{$\times$} &
\multicolumn{1}{c|}{$\checkmark$} &
\multicolumn{1}{c|}{} &
\multicolumn{1}{c|}{MADDQN} &
\multicolumn{1}{c|}{$\times$} &
\multicolumn{1}{c|}{$\times$} &
\multicolumn{1}{c|}{$\times$} &
\multicolumn{1}{c|}{PLS} &
\multicolumn{1}{c|}{$\times$} &
\multicolumn{1}{c|}{$\times$} &
\multicolumn{1}{c|}{$\times$} &
\multicolumn{1}{c|}{$\checkmark$} &
\multicolumn{1}{c|}{$\times$} &
\multicolumn{1}{c|}{$\times$} &
\multicolumn{1}{c|}{$\times$} &
\multicolumn{1}{c|}{Simulation} \\
\cline{1-4}\cline{6-18}
\multicolumn{1}{|c|}{Wu~\emph{et~al.}~\cite{10720029}} &
\multicolumn{1}{c|}{Single} &
\multicolumn{1}{c|}{$\times$} &
\multicolumn{1}{c|}{$\times$} &
\multicolumn{1}{c|}{} &
\multicolumn{1}{c|}{MADDPG} &
\multicolumn{1}{c|}{$\times$} &
\multicolumn{1}{c|}{$\checkmark$} &
\multicolumn{1}{c|}{$\times$} &
\multicolumn{1}{c|}{$\times$} &
\multicolumn{1}{c|}{$\times$} &
\multicolumn{1}{c|}{$\times$} &
\multicolumn{1}{c|}{$\times$} &
\multicolumn{1}{c|}{$\checkmark$} &
\multicolumn{1}{c|}{$\checkmark$} &
\multicolumn{1}{c|}{$\checkmark$} &
\multicolumn{1}{c|}{$\checkmark$} &
\multicolumn{1}{c|}{Simulation} \\
\cline{1-4}\cline{6-18}
\multicolumn{1}{|c|}{Lei~\emph{et~al.}~\cite{11085134}} &
\multicolumn{1}{c|}{Single} &
\multicolumn{1}{c|}{$\times$} &
\multicolumn{1}{c|}{$\times$} &
\multicolumn{1}{c|}{} &
\multicolumn{1}{c|}{MADDPG} &
\multicolumn{1}{c|}{$\times$} &
\multicolumn{1}{c|}{$\checkmark$} &
\multicolumn{1}{c|}{$\times$} &
\multicolumn{1}{c|}{$\times$} &
\multicolumn{1}{c|}{$\times$} &
\multicolumn{1}{c|}{$\times$} &
\multicolumn{1}{c|}{$\times$} &
\multicolumn{1}{c|}{$\checkmark$} &
\multicolumn{1}{c|}{$\times$} &
\multicolumn{1}{c|}{$\checkmark$} &
\multicolumn{1}{c|}{$\checkmark$} &
\multicolumn{1}{c|}{Simulation} \\
\cline{1-4}\cline{6-18}
\multicolumn{1}{|c|}{Zeng~\emph{et~al.}~\cite{11096732}} &
\multicolumn{1}{c|}{Single} &
\multicolumn{1}{c|}{$\times$} &
\multicolumn{1}{c|}{$\times$} &
\multicolumn{1}{c|}{} &
\multicolumn{1}{c|}{MAPPO} &
\multicolumn{1}{c|}{$\times$} &
\multicolumn{1}{c|}{$\checkmark$} &
\multicolumn{1}{c|}{$\times$} &
\multicolumn{1}{c|}{$\times$} &
\multicolumn{1}{c|}{$\times$} &
\multicolumn{1}{c|}{$\times$} &
\multicolumn{1}{c|}{$\times$} &
\multicolumn{1}{c|}{$\checkmark$} &
\multicolumn{1}{c|}{$\times$} &
\multicolumn{1}{c|}{$\times$} &
\multicolumn{1}{c|}{$\times$} &
\multicolumn{1}{c|}{Simulation} \\
\cline{1-4}\cline{6-18}
\multicolumn{1}{|c|}{\textbf{CRAFT (Ours)}} &
\multicolumn{1}{c|}{\textbf{Multiple}} &
\multicolumn{1}{c|}{\textbf{$\checkmark$}} &
\multicolumn{1}{c|}{\textbf{$\checkmark$}} &
\multicolumn{1}{c|}{} &
\multicolumn{1}{c|}{\textbf{MA Actor-Critic}} &
\multicolumn{1}{c|}{\textbf{CMDP+Lagrangian}} &
\multicolumn{1}{c|}{$\checkmark$} &
\multicolumn{1}{c|}{\textbf{Prox. Reg.}} &
\multicolumn{1}{c|}{\textbf{CKKS}} &
\multicolumn{1}{c|}{\textbf{CKKS}} &
\multicolumn{1}{c|}{\textbf{ML-DSA}} &
\multicolumn{1}{c|}{\textbf{$\checkmark$}} &
\multicolumn{1}{c|}{$\checkmark$} &
\multicolumn{1}{c|}{$\checkmark$} &
\multicolumn{1}{c|}{$\checkmark$} &
\multicolumn{1}{c|}{$\checkmark$} &
\multicolumn{1}{c|}{\textbf{Testbed}} \\
\hline
\end{tabular}%
}
\par\vspace{1mm}
\begin{minipage}{\textwidth}
\raggedright\footnotesize\linespread{1.0}\selectfont
\textit{}SARL = Single-Agent RL; MARL = Multi-Agent RL; DDPG = Deep Deterministic Policy Gradient; TD3 = Twin Delayed DDPG; A3C = Asynchronous Advantage Actor-Critic; GPLA = Generalist Pursuit Learning Algorithm; GBD = Generalized Benders Decomposition; MADDPG = Multi-Agent DDPG; MASAC = Multi-Agent Soft Actor-Critic; MAPPO = Multi-Agent Proximal Policy Optimization; MADDQN = Multi-Agent Double Deep Q-Network; CMDP = Constrained Markov Decision Process; FL = Federated Learning; Prox.\ Reg.\ = Proximal Regularization; Confid.\ = Confidentiality; Eaves.\ = Eavesdropper; ES = Edge Server; AES = Advanced Encryption Standard; PLS = Physical Layer Security; PQC = Post-Quantum Cryptography; CKKS = Cheon-Kim-Kim-Song; ML-DSA = Module-Lattice Digital Signature Algorithm; Compl.\ Rate = Task Completion Rate; Multi Obj = Multi-Objective Optimization.
\end{minipage}
\end{table*}

\subsubsection{Research Gap Identification}
As shown in Table~\ref{tab:related_works}, five research gaps emerge.

\textbf{Gap 1: Oversimplified task modeling.} All existing works treat each vehicular task as a monolithic unit, whereas representative applications (e.g., autonomous-driving perception pipelines) comprise multiple dependent stages whose completion time is governed by the critical path rather than a sum of subtask delays. None jointly models subtask structure, dependency, and heterogeneity.

\textbf{Gap 2: No long-term constraint enforcement.} Task completion rate is safety-critical, yet no existing work guarantees a minimum long-term rate through a formal constraint-handling mechanism; where considered, it appears only as a soft reward term without any threshold guarantee.

\textbf{Gap 3: Federated learning without non-IID handling.} Only three works~\cite{10720029, 11085134, 11096732} adopt FL, and none addresses non-IID data: as traffic density, speed, and task composition vary across regions, standard aggregation lets local models drift from the global model, degrading aggregation quality and destabilizing training.

\textbf{Gap 4: Incomplete security without PQC-based parameter protection.} Although a few works secure the offloading process, each addresses only one of confidentiality and integrity~\cite{xiao2023joint, 10745593, ju2024eeso, ju2023sora, 11536917, 10969549}; even when confidentiality is provided, it guards only against external eavesdroppers, leaving the parameters exposed in plaintext at the edge servers that process them. The FL-based schemes~\cite{10720029, 11085134, 11096732}, by contrast, leave the exchanged model parameters unprotected altogether. Across all of them, none employs PQC to protect the parameters.

\textbf{Gap 5: Simulation-only evaluation.} All existing works evaluate purely in simulation, estimating delay and energy from analytical formulas rather than measuring real computation kernels on actual hardware.

CRAFT is designed to simultaneously address all five gaps by integrating CMDP-based Lagrangian constraint handling with DAG-aware scheduling, proximal-regularized federated aggregation with post-quantum security protection combining CKKS and ML-DSA signatures, and VM-based evaluation with real application workloads.

\section{System Model and Problem Formulation}
\label{sec:system_model}

This section presents the system model and formulates the optimization problem. The key notations are summarized in Appendix Table A1.

\subsection{System Overview}

As illustrated in Fig.~\ref{fig:system}, we consider a Vehicular Edge Computing (VEC) system consisting of multiple vehicles and multiple roadside Edge Servers (ESs). The system operates over a discretized time horizon divided into $T$ equal-length time slots, denoted by $\mathcal{T} = \{1, 2, \ldots, T\}$, each of duration $\tau$. We adopt a quasi-static model, where the channel conditions remain approximately constant within each time slot. Each DAG task is either completed or declared failed within the time slot in which it is generated, and the system begins a new decision epoch at the start of the next time slot.

The system comprises $N$ vehicles, denoted by $\mathcal{N} = \{1, 2, \ldots, N\}$, and $M$ ESs deployed on Roadside Units (RSUs), denoted by $\mathcal{M} = \{1, 2, \ldots, M\}$. Each ES $m$ has a coverage radius $R_m$ and a total computing capacity $F_m$ (cycles/s). Each vehicle generates computation tasks with internal dependencies (modeled as DAGs) while driving. A vehicle can either execute subtasks locally or offload them to an ES within its communication range via V2I links. The offloading destination, including the choice of which ES to use, is dynamically determined by the system rather than simply selecting the nearest ES.

\begin{figure}[t]
    \centering
    \includegraphics[width=\columnwidth]{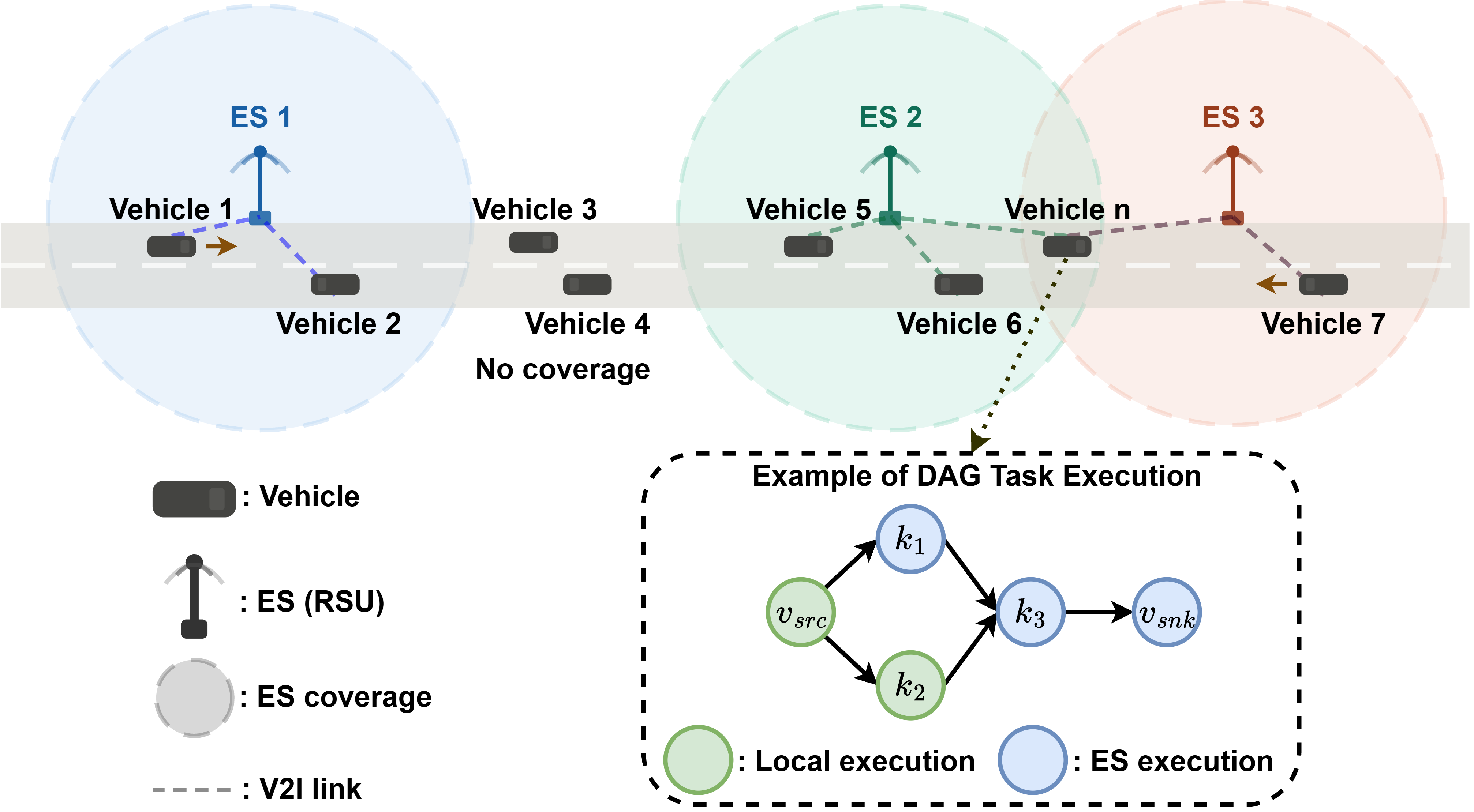}
    \caption{System overview of the VEC environment with multiple vehicles and ESs, where each vehicle's DAG task can be executed locally or offloaded to ESs.}
    \label{fig:system}
\end{figure}

\subsection{Mobility Model}

Vehicle trajectories are generated by Simulation of Urban Mobility (SUMO) on realistic road topologies. In time slot $t$, the position of vehicle $n$ is denoted as $\mathbf{p}_n(t) = (x_n(t), y_n(t))$, its velocity vector as $\mathbf{v}_n(t)$, and its speed as $v_n(t) = \|\mathbf{v}_n(t)\|$. ES $m$ is fixed at position $\mathbf{q}_m = (x_m, y_m)$. The distance between vehicle $n$ and ES $m$ is given by
\begin{equation}
    d_{n,m}(t) = \|\mathbf{p}_n(t) - \mathbf{q}_m\|_2.
\end{equation}

Vehicle $n$ is within the coverage of ES $m$ when $d_{n,m}(t) \leq R_m$. The set of ESs reachable by vehicle $n$ in time slot $t$ is $\mathcal{M}_n(t) = \{m \in \mathcal{M} \mid d_{n,m}(t) \leq R_m\}$.

Although the quasi-static model assumes constant channel parameters within a time slot, vehicles continue to move during the slot. Consequently, the residual sojourn time of a vehicle within an ES's coverage may be less than $\tau$. Assuming that the vehicle travels at constant velocity along its current heading during the estimation window, we define $\boldsymbol{\Delta}_{n,m}(t) = \mathbf{p}_n(t) - \mathbf{q}_m$. The residual sojourn time $D_{n,m}^{\mathrm{res}}(t)$ of vehicle $n$ with respect to ES $m$ is obtained by solving
\begin{equation}
    \|\boldsymbol{\Delta}_{n,m}(t) + \mathbf{v}_n(t) \cdot D_{n,m}^{\mathrm{res}}(t)\|^2 = R_m^2,
\end{equation}
yielding the positive root:
\begin{equation}
    \scalebox{0.85}{$D_{n,m}^{\mathrm{res}}(t) = \frac{-\boldsymbol{\Delta}_{n,m}(t) \cdot \mathbf{v}_n(t) + \sqrt{\left(\boldsymbol{\Delta}_{n,m}(t) \cdot \mathbf{v}_n(t)\right)^2 - \|\mathbf{v}_n(t)\|^2\left(\|\boldsymbol{\Delta}_{n,m}(t)\|^2 - R_m^2\right)}}{\|\mathbf{v}_n(t)\|^2}.$}
\end{equation}

When $D_{n,m}^{\mathrm{res}}(t) < \tau$, vehicle $n$ may exit the coverage of ES $m$ during the current time slot, and subtasks offloaded to ES $m$ face the risk of failure.

\subsection{Communication Model}

Vehicles communicate with ESs via V2I links. The system employs Orthogonal Frequency-Division Multiple Access (OFDMA)~\cite{3gpp_v2x} with a total bandwidth $B$, equally allocated among the vehicles connected to the same ES:
\begin{equation}
    b_{n,m}(t) = \frac{B}{|\mathcal{N}_m(t)|},
\end{equation}
where $\mathcal{N}_m(t)$ is the set of vehicles connected to ES $m$ in time slot $t$.

The channel gain between vehicle $n$ and ES $m$ is modeled as the product of path loss and small-scale Rayleigh fading:
\begin{equation}
    h_{n,m}(t) = h_0 \cdot d_{n,m}(t)^{-\alpha} \cdot |\xi_{n,m}(t)|^2,
\end{equation}
where $h_0$ is the channel gain at the reference distance, $\alpha$ is the path loss exponent, and $\xi_{n,m}(t) \sim \mathcal{CN}(0,1)$ is the Rayleigh fading coefficient.

The uplink transmission rate from vehicle $n$ to ES $m$ is
\begin{equation}
    r_{n,m}^{\mathrm{up}}(t) = b_{n,m}(t) \log_2 \left(1 + \frac{P_n^{\mathrm{tx}} \cdot h_{n,m}(t)}{\sigma^2}\right),
\end{equation}
where $P_n^{\mathrm{tx}}$ is the transmit power of vehicle $n$ and $\sigma^2$ is the additive white Gaussian noise power. The downlink transmission rate from ES $m$ to vehicle $n$ is
\begin{equation}
    r_{n,m}^{\mathrm{down}}(t) = b_{n,m}(t) \log_2 \left(1 + \frac{P_m^{\mathrm{ES}} \cdot h_{n,m}(t)}{\sigma^2}\right),
\end{equation}
where $P_m^{\mathrm{ES}}$ is the transmit power of ES $m$. Additionally, ESs are interconnected via a wired backhaul network with a transmission rate of $r^{\mathrm{wire}}$.

\subsection{Task Model}

Computation tasks of vehicle $n$ arrive according to a Poisson process with rate $\Lambda_n$. Each task is modeled as a DAG $G_n = (\mathcal{V}_n, \mathcal{E}_n)$. To provide a unified framework for modeling all data flows, we augment the original subtask set $\mathcal{K}_n = \{1, 2, \ldots, K_n\}$ with two virtual nodes: a source node $v_{\mathrm{src}}$ and a sink node $v_{\mathrm{snk}}$, both logically located at the vehicle with zero computation workload. The node set is $\mathcal{V}_n = \{v_{\mathrm{src}}\} \cup \mathcal{K}_n \cup \{v_{\mathrm{snk}}\}$.

Each actual subtask $k \in \mathcal{K}_n$ is characterized by a tuple $(d_k, c_k, o_k)$, where $d_k$ is the input data size (bits), $c_k$ is the required computation workload (CPU cycles), and $o_k$ is the output data size (bits).

The edge set $\mathcal{E}_n$ consists of three types of directed edges, each carrying a weight $w_{i,k}$ representing the amount of data to be transferred: (i) dependency edges between original subtasks $(i, k)$ with weight $w_{i,k} = o_i$; (ii) edges from $v_{\mathrm{src}}$ to each source subtask $k$ (i.e., subtasks with no predecessors) with weight $w_{v_{\mathrm{src}},k} = d_k$; and (iii) edges from each sink subtask $k$ (i.e., subtasks with no successors) to $v_{\mathrm{snk}}$ with weight $w_{k,v_{\mathrm{snk}}} = o_k$.

For each actual subtask $k \in \mathcal{K}_n$, the system makes an offloading decision $a_k \in \{0\} \cup \mathcal{M}_n(t)$, where $a_k = 0$ indicates local execution and $a_k = m$ indicates offloading to ES $m$. The virtual nodes are fixed at the vehicle: $a_{v_{\mathrm{src}}} = a_{v_{\mathrm{snk}}} = 0$.

\subsection{Computation Model}

\subsubsection{Local Computation}

If subtask $k$ is executed locally on vehicle $n$ ($a_k = 0$), the vehicle's CPU frequency is $f_n^{\mathrm{loc}}$ (cycles/s). The computation delay and energy consumption are
\begin{equation}
    D_k^{\mathrm{comp}} = \frac{c_k}{f_n^{\mathrm{loc}}}, \quad E_k^{\mathrm{comp}} = \kappa (f_n^{\mathrm{loc}})^2 \cdot c_k,
\end{equation}
where $\kappa$ is the effective capacitance coefficient related to the chip architecture.

\subsubsection{Edge Computation}

If subtask $k$ is offloaded to ES $m$ ($a_k = m$), the ES allocates computing resources $f_{k,m}^{\mathrm{edge}}$ (cycles/s) to it. The computation delay is:
\begin{equation}
    D_k^{\mathrm{comp}} = \frac{c_k}{f_{k,m}^{\mathrm{edge}}}.
\end{equation}
Edge execution does not consume vehicle-side energy, hence $E_k^{\mathrm{comp}} = 0$.

\subsubsection{Data Transfer Model}

Data transfers within the DAG are modeled through a unified framework. For each edge $(i, k) \in \mathcal{E}_n$, let $l_i$ and $l_k$ denote the execution locations of nodes $i$ and $k$, respectively ($l = 0$ for local, $l = m$ for ES $m$). The transfer delay $D_{i,k}^{\mathrm{trans}}$ and vehicle-side transfer energy $E_{i,k}^{\mathrm{trans}}$ are determined by the following four cases:

\noindent\textbf{(a)} $l_i = l_k$ (same node):
\begin{equation}
    D_{i,k}^{\mathrm{trans}} = 0, \quad E_{i,k}^{\mathrm{trans}} = 0.
\end{equation}

\noindent\textbf{(b)} $l_i = 0, l_k = m$ (local $\to$ ES $m$, uplink):
\begin{equation}
    D_{i,k}^{\mathrm{trans}} = \frac{w_{i,k}}{r_{n,m}^{\mathrm{up}}(t)}, \quad E_{i,k}^{\mathrm{trans}} = P_n^{\mathrm{tx}} \cdot D_{i,k}^{\mathrm{trans}}.
\end{equation}

\noindent\textbf{(c)} $l_i = m, l_k = 0$ (ES $m$ $\to$ local, downlink):
\begin{equation}
    D_{i,k}^{\mathrm{trans}} = \frac{w_{i,k}}{r_{n,m}^{\mathrm{down}}(t)}, \quad E_{i,k}^{\mathrm{trans}} = 0.
\end{equation}

\noindent\textbf{(d)} $l_i = m_1, l_k = m_2$, $m_1 \neq m_2$ (ES $\to$ ES, backhaul):
\begin{equation}
    D_{i,k}^{\mathrm{trans}} = \frac{w_{i,k}}{r^{\mathrm{wire}}}, \quad E_{i,k}^{\mathrm{trans}} = 0.
\end{equation}

This framework uniformly handles all data flow scenarios: the initial data upload for source subtasks corresponds to $v_{\mathrm{src}} \to k$ (case b), and the final result retrieval corresponds to $k \to v_{\mathrm{snk}}$ (case c).

\subsubsection{DAG Completion Time and Total Energy Consumption}

We define the predecessor set $\mathrm{pred}(k) = \{i \mid (i, k) \in \mathcal{E}_n\}$ and the successor set $\mathrm{succ}(k) = \{j \mid (k, j) \in \mathcal{E}_n\}$. The start time and finish time of subtask $k$ are recursively defined as
\begin{equation}
    D_k^{\mathrm{start}} = \max_{i \in \mathrm{pred}(k)} \left(D_i^{\mathrm{finish}} + D_{i,k}^{\mathrm{trans}}\right),
\end{equation}
\begin{equation}
    D_k^{\mathrm{finish}} = D_k^{\mathrm{start}} + D_k^{\mathrm{comp}},
\end{equation}
where $D_{v_{\mathrm{src}}}^{\mathrm{finish}} = 0$. The makespan of the entire DAG is
\begin{equation}
    D_n^{\mathrm{DAG}} = D_{v_{\mathrm{snk}}}^{\mathrm{start}}.
\end{equation}

The total vehicle-side energy consumption for a DAG task of vehicle $n$ is
\begin{equation}
    E_n^{\mathrm{total}} = \sum_{k \in \mathcal{K}_n} E_k^{\mathrm{comp}} + \sum_{(i,k) \in \mathcal{E}_n} E_{i,k}^{\mathrm{trans}}.
\end{equation}

\subsection{Problem Formulation}
\label{sec:problem}
Our objective is to jointly optimize the offloading decisions $\mathbf{a}(t) = \{a_k(t)\}$ and the ES resource allocation $\mathbf{f}(t) = \{f_{k,m}^{\mathrm{edge}}(t)\}$ across all vehicles and time slots, so as to minimize a weighted sum of system-wide latency and vehicle-side energy consumption while ensuring that the task completion rate is no less than a given threshold.

\textbf{Task Success Condition.} The DAG task of vehicle $n$ in time slot $t$ is deemed successful ($s_n(t) = 1$) if and only if both of the following conditions are satisfied; otherwise, $s_n(t) = 0$.

\emph{Condition 1 (Deadline):} DAG is completed within deadline:
\begin{equation}
    D_n^{\mathrm{DAG}}(t) \leq D_n^{\max}.
\end{equation}

\emph{Condition 2 (Sojourn Coverage):} For every subtask $k$ offloaded to ES $m$ ($a_k = m$), vehicle $n$ remains within the coverage of ES $m$ throughout the entire interaction, from data upload through computation to result retrieval. Formally, the last interaction time of subtask $k$ at ES $m$ is defined as
\begin{equation}
    \hat{D}_k(t) = D_k^{\mathrm{finish}}(t) + \max_{j \in \mathrm{succ}(k): l_j = 0} D_{k,j}^{\mathrm{trans}}(t),
\end{equation}
where the second term is zero when all successors are also executed on ES $m$. This condition requires
\begin{equation}
    \hat{D}_k(t) \leq D_{n,m}^{\mathrm{res}}(t), \quad \forall k \in \mathcal{K}_n \text{ with } a_k(t) = m.
\end{equation}

The optimization problem is formulated as follows:
\begin{subequations}\label{eq:P1}
{\small
\begin{alignat}{2}
    \mathbf{P1}:\min_{\mathbf{a}(t), \mathbf{f}(t)} \ &\frac{1}{NT} \sum_{t=1}^{T} \sum_{n=1}^{N} \left[\omega_1 \cdot D_n^{\mathrm{DAG}}(t) + \omega_2 \cdot E_n^{\mathrm{total}}(t)\right] \label{eq:obj}\\
    \mathrm{s.t.} \quad & \text{C1}: \frac{1}{NT} \sum_{t=1}^{T} \sum_{n=1}^{N} s_n(t) \geq \theta, \label{eq:C1}\\
    & \text{C2}: \!\sum_{k: a_k(t) = m}\! f_{k,m}^{\mathrm{edge}}(t) \leq F_m, \  \forall m \!\in\! \mathcal{M}, \forall t, \label{eq:C2}\\
    & \text{C3}: D_k^{\mathrm{start}}(t) \geq D_i^{\mathrm{finish}}(t) + D_{i,k}^{\mathrm{trans}}(t), \nonumber\\
    & \qquad\ \forall (i,k) \!\in\! \mathcal{E}_n, \forall n, \forall t, \label{eq:C3}\\
    & \text{C4}: a_k(t) \!\in\! \{0\} \cup \mathcal{M}_n(t), \  \forall k \!\in\! \mathcal{K}_n, \forall n, \forall t, \label{eq:C4}\\
    & \text{C5}: f_{k,m}^{\mathrm{edge}}(t) > 0, \  \forall k \text{ with } a_k(t) \!=\! m. \label{eq:C5}
\end{alignat}
}
\end{subequations}
where $\omega_1, \omega_2 > 0$ are the weighting coefficients, and $\theta \in (0,1]$ is the minimum task completion rate threshold.

C1 is a long-term statistical constraint requiring the overall task completion rate to be no less than $\theta$, where the success condition encompasses both the deadline and the sojourn coverage requirements. Unlike per-slot hard constraints, C1 permits occasional task failures in individual time slots as long as the long-term average success rate meets the threshold. C2 is the ES computing resource capacity constraint. C3 is the DAG dependency constraint, ensuring that a subtask begins execution only after all its predecessors have completed and their output data have been received. C4 restricts the offloading target to currently reachable ESs. C5 ensures each offloaded subtask receives a positive amount of computing resources.

\textbf{P1} is a Mixed-Integer Nonlinear Programming (MINLP) problem, where the offloading decisions $\mathbf{a}$ are discrete variables and the resource allocations $\mathbf{f}$ are continuous variables. Solving this problem poses three key challenges: (i) coupling of discrete and continuous variables causes the search space to grow exponentially with the number of vehicles and subtasks, rendering the problem NP-hard in general; (ii) the time-varying nature of vehicle mobility and channel conditions necessitates online real-time decision-making; and (iii) C1 is a long-term statistical constraint spanning multiple time slots, which cannot be verified within a single slot. Given these challenges, \textbf{P1} is naturally modeled as a Constrained Markov Decision Process (CMDP), where C1 is handled via Lagrangian relaxation and transformed into a dual optimization problem. In Section~\ref{sec:algorithm}, we propose CRAFT, a secure federated multi-agent Reinforcement Learning framework with Lagrangian-based constraint handling, to solve the problem online.

\section{Proposed CRAFT Framework}
\label{sec:algorithm}

This section presents the CRAFT framework for solving \textbf{P1} online. All notations follow from Section~\ref{sec:system_model}.

\subsection{Framework Overview}

To solve the CMDP problem formulated in Section~\ref{sec:problem}, we propose CRAFT. As shown in Fig.~\ref{fig:framework}, CRAFT consists of four cooperating components: (i) a multi-agent decision module following the CTDE paradigm, where Actors are deployed on vehicles for offloading decisions and Critics are deployed on ESs for value estimation; (ii) a Lagrangian relaxation-based constraint handling module incorporating the long-term completion rate constraint into the reward function via online dual variable updates; (iii) a constraint-aware federated Actor aggregation module that enables cross-region knowledge transfer through a two-tier aggregation structure with proximal regularization for non-IID resilience and Lagrangian stabilization; and (iv) a post-quantum security module that combines CKKS for parameter confidentiality during aggregation against both eavesdroppers and honest-but-curious edge servers, with ML-DSA post-quantum digital signatures for integrity verification.

\begin{figure}[t]
    \centering
    \includegraphics[width=\columnwidth, height=7.5cm]{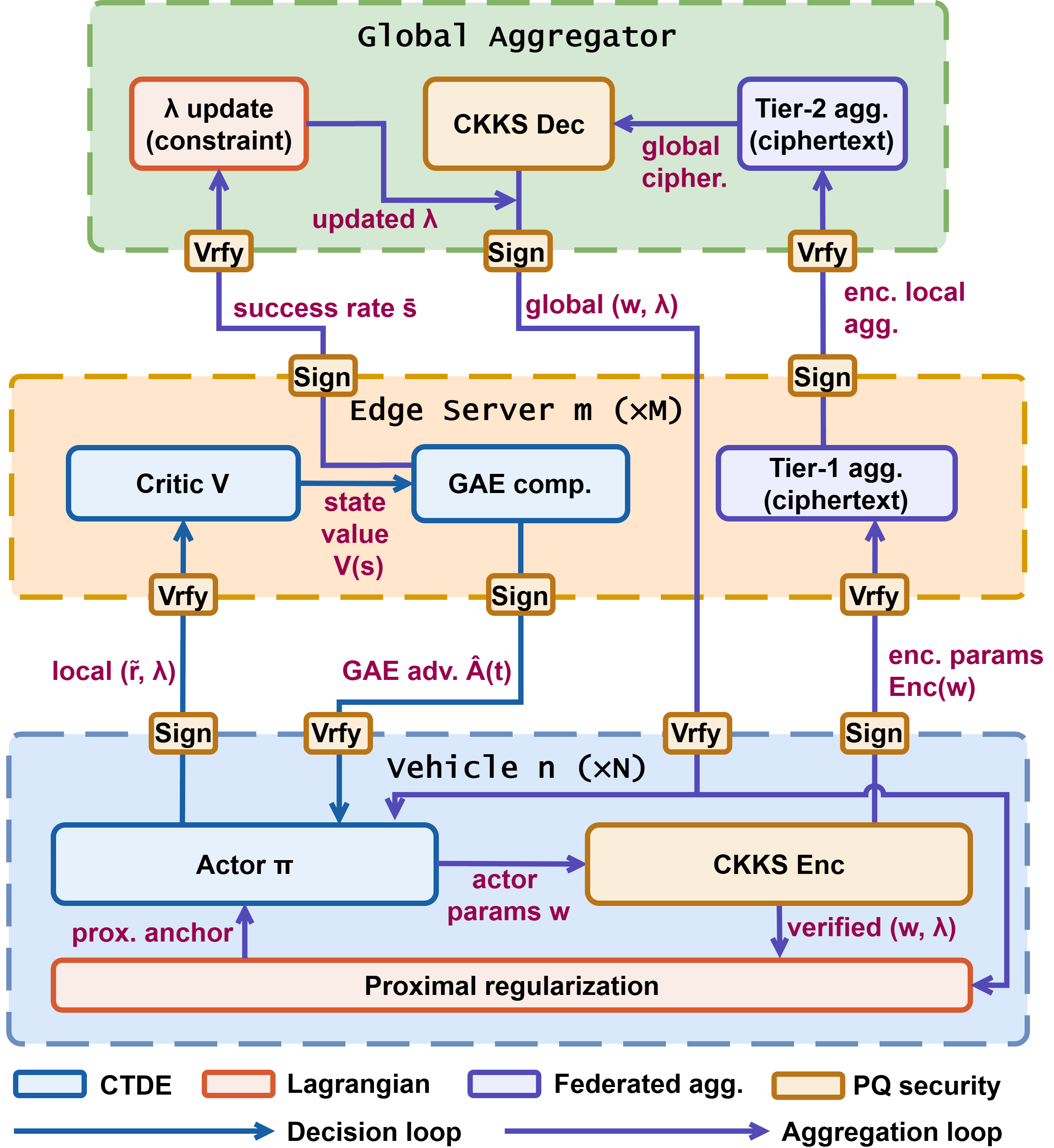}
    \caption{Overview of the CRAFT framework.}
    \label{fig:framework}
\end{figure}

CRAFT operates two parallel loops. The \emph{decision loop} executes once per time slot: vehicles make offloading decisions based on local observations, ESs compute advantage estimates using global states and return them to vehicles, and vehicles update their local Actors accordingly. The \emph{aggregation loop} executes every $K$ time slots: vehicles upload encrypted Actor parameters to ESs, which are aggregated through two tiers and redistributed to all vehicles, with the Lagrangian multiplier synchronized simultaneously.

\subsection{CMDP Formulation}
\label{sec:cmdp}
We model \textbf{P1} as a Constrained Markov Decision Process (CMDP), defined as follows.

\textbf{State Space.} Since the Actor and Critic in the CTDE paradigm require different inputs, we define the local observation and global state separately. The local observation of vehicle $n$ in time slot $t$ is
\begin{align}
    o_n(t) = \big\{ & \mathbf{p}_n(t),\ \mathbf{v}_n(t),\ \{d_k, c_k, o_k, \mathcal{E}_k\}_{k \in \mathcal{K}_n},\ f_n^{\mathrm{loc}}, \nonumber\\
    & \{d_{n,m}(t),\ D_{n,m}^{\mathrm{res}}(t)\}_{m \in \mathcal{M}_n(t)} \big\},
\end{align}
which includes the vehicle's position and velocity, the subtask characteristics and dependency structure of the current DAG task, the local computing capacity, and the distance and residual sojourn time to each reachable ES.

The global state maintained by ES $m$ is
\begin{equation}
    \scalebox{0.85}{$s_m(t) = \left\{ \{o_n(t)\}_{n \in \mathcal{N}_m(t)}, \{F_{m^{\prime}}^{\mathrm{avail}}(t)\}_{m^{\prime} \in \mathcal{M}}, \{|\mathcal{N}_{m^{\prime}}(t)|\}_{m^{\prime} \in \mathcal{M}}, \lambda(t) \right\},$}
\end{equation}
which includes the local observations of all vehicles within the coverage, the available computing resources and the number of connected vehicles at each ES, and the current Lagrangian multiplier.

\textbf{Action Space.} The action of vehicle $n$ comprises two parts, forming a hybrid action space:
\begin{equation}
    a_n(t) = \left( \{a_k\}_{k \in \mathcal{K}_n},\ \{\beta_k\}_{k \in \mathcal{K}_n: a_k \neq 0} \right),
\end{equation}
where $a_k \in \{0\} \cup \mathcal{M}_n(t)$ is the discrete offloading decision for subtask $k$, and $\beta_k \in (0, 1]$ is the resource request weight for subtask $k$. Since multiple vehicles may simultaneously request resources from the same ES, the actual allocated computing frequency is determined by proportional normalization at the ES side:
\begin{equation}
\label{eq:resource_alloc}
    f_{k,m}^{\mathrm{edge}}(t) = \frac{\beta_k}{\sum_{n \in \mathcal{N}_m(t)}
    \sum_{k' \in \mathcal{K}_n:\, a_{k'}=m} \beta_{k'}} \cdot F_m,
\end{equation}
which automatically satisfies the capacity constraint C2 in~\eqref{eq:C2}. The Actor network outputs offloading decisions through a discrete head (softmax) and resource request weights through a continuous head (Gaussian distribution).

\textbf{Reward Function.} Since latency and energy consumption have different physical units and magnitudes, we apply min-max normalization to map both metrics to $[0, 1]$:
\begin{equation}
    \bar{D}_n(t) = \frac{D_n^{\mathrm{DAG}}(t) - D_{\min}}{D_{\max} - D_{\min}}, \quad \bar{E}_n(t) = \frac{E_n^{\mathrm{total}}(t) - E_{\min}}{E_{\max} - E_{\min}},
\end{equation}
where $D_{\max}, D_{\min}, E_{\max}, E_{\min}$ are the maximum and minimum values observed across all vehicles and time slots during training. The reward of vehicle $n$ in time slot $t$ is
\begin{equation}
    r_n(t) = -\left[\omega_1 \cdot \bar{D}_n(t) + \omega_2 \cdot \bar{E}_n(t)\right].
\end{equation}

\textbf{Constraint Cost.} Corresponding to the long-term constraint C1, the per-step constraint cost is defined as
\begin{equation}
    c_n(t) = \theta - s_n(t),
\end{equation}
where $s_n(t)$ is the task success indicator defined in Section~\ref{sec:problem}. C1 is equivalent to requiring the long-term average constraint cost to be non-positive: $\frac{1}{NT}\sum_{t,n} c_n(t) \leq 0$.

\textbf{State Transition.} The state transition is jointly driven by vehicle mobility (generated by SUMO) and new DAG task arrivals (Poisson process), satisfying the Markov property.

\subsection{Lagrangian-based Constrained Optimization}
\label{sec:lagrangian}

The choice of CMDP with Lagrangian relaxation is motivated by three structural properties of C1. First, C1 is a trajectory-level statistical constraint that cannot be evaluated at individual states, ruling out per-step safe action filtering methods~\cite{ames2017cbf}. Second, the binary indicator $s_n(t)$ is non-differentiable, precluding direct application of trust-region constrained policy methods such as Constrained Policy Optimization (CPO)~\cite{achiam2017cpo} that rely on smooth constraint surrogates. Third, fixed-penalty reward shaping lacks an adaptive mechanism for correcting long-term constraint violations~\cite{tessler2018reward}. The CMDP framework~\cite{altman1999cmdp} naturally accommodates such long-term constraints, and Lagrangian relaxation provides a principled way to adaptively penalize task failures while remaining compatible with actor-critic architectures and federated aggregation. In Section~\ref{sec:analysis}, we further provide conditional stability bounds showing how the projected dual update controls the average positive constraint violation up to a residual term induced by inexact non-convex policy optimization.

Specifically, introducing the Lagrangian multiplier $\lambda \geq 0$, we construct the Lagrangian function:
\begin{equation}
    \mathcal{L}(\pi, \lambda) = \mathbb{E}_\pi\left[\frac{1}{NT}\sum_{t=1}^{T}\sum_{n=1}^{N}\left(-r_n(t) + \lambda \cdot c_n(t)\right)\right].
\end{equation}

The original problem \textbf{P1} is transformed into:
\begin{equation}
    \min_\pi \max_{\lambda \geq 0}\ \mathcal{L}(\pi, \lambda).
\end{equation}

The dual problem is solved via alternating optimization. In the inner loop, given $\lambda$, the policy $\pi$ is optimized by maximizing the augmented reward:
\begin{align}\label{eq:aug_reward}
    \tilde{r}_n(t) &= r_n(t) - \lambda \cdot c_n(t) \nonumber\\
    &= -\left[\omega_1 \cdot \bar{D}_n(t) + \omega_2 \cdot \bar{E}_n(t)\right] - \lambda \cdot (\theta - s_n(t)).
\end{align}

When a task fails ($s_n(t) = 0$), the augmented reward incurs a penalty of $-\lambda\theta$; when a task succeeds ($s_n(t) = 1$), it receives a compensation of $\lambda(1-\theta)$. A larger $\lambda$ imposes heavier penalties on task failures, driving the policy toward more conservative decisions to improve the completion rate.

In the outer loop, $\lambda$ is updated online via gradient ascent:
\begin{equation}\label{eq:lambda_update}
    \lambda \leftarrow \Pi_{[0,\,\lambda^{\max}]}\!\left(\lambda + \eta_\lambda \cdot (\theta - \bar{s}(t))\right),
\end{equation}
where $\eta_\lambda > 0$ is the multiplier learning rate, $\bar{s}(t)$ is the recent average task success rate, $\lambda^{\max}$ is a prescribed upper bound on the multiplier, and $\Pi_{[0,\,\lambda^{\max}]}$ denotes projection onto $[0, \lambda^{\max}]$. When the success rate falls below $\theta$, $\lambda$ increases to intensify the failure penalty; conversely, $\lambda$ decreases to allow the policy to pursue lower latency and energy consumption. This mechanism enables dynamic balancing between performance optimization and constraint satisfaction.

\subsection{Multi-Agent Policy Optimization}
\label{sec:mappo}

This subsection details the multi-agent decision-making and training mechanisms.

\textbf{Architecture Design.} CRAFT adopts the CTDE paradigm, whose suitability arises from the inherent architecture of VEC systems: ESs naturally acquire state information of all vehicles within their coverage through the task offloading management process, and can therefore serve as centralized Critics with zero additional communication overhead. Specifically, the Actor network $\pi_{\boldsymbol{w}_n}$ is deployed on vehicle $n$ and outputs actions based on local observation $o_n(t)$; the Critic network $V_{\boldsymbol{w}_m^c}$ is deployed on ES $m$ and outputs state value estimates based on global state $s_m(t)$.

\textbf{Network Structure.} Local observation $o_n(t)$ contains variable-length components: the subtask count $K_n$ varies across task types and the reachable ES set $\mathcal{M}_n(t)$ changes as vehicles move. Each subtask's features $(d_k, c_k, o_k)$ and dependency context are encoded by a shared embedding layer. A learnable aggregation token is prepended to the subtask embeddings and processed through a self-attention layer; the output at the aggregation token position serves as the fixed-dimension DAG representation. ES-related features span all $M$ ESs with unreachable entries masked to zero. The Actor combines the DAG representation with masked ES features through a fully connected layer, followed by two parallel output heads. The discrete head outputs a probability distribution over $M+1$ offloading options for each subtask via softmax, with unreachable ESs masked to zero and renormalized before sampling: $\pi_{\boldsymbol{w}_n}^{\mathrm{disc}}(a_k \mid o_n(t))$. Continuous head outputs resource request weights via a Gaussian distribution: $\beta_k \sim \mathcal{N}(\mu_{\boldsymbol{w}_n} (o_n(t)),\ \sigma_{\boldsymbol{w}_n}^2(o_n(t)))$, mapped to $(0, 1]$ through a sigmoid function. The Critic adopts same aggregation token mechanism over vehicle embeddings: each vehicle's observation within ES $m$'s coverage is encoded by a shared embedding layer, and self-attention with an aggregation token produces a fixed-dimension state representation independent of $|\mathcal{N}_m(t)|$.

\textbf{Training Procedure.} Training proceeds through the following steps in each time slot $t$:

\emph{Step 1:} Vehicle $n$ samples action $a_n(t) \sim \pi_{\boldsymbol{w}_n}(\cdot \mid o_n(t))$ based on local observation $o_n(t)$, obtaining offloading decisions $\{a_k\}$ and resource request weights $\{\beta_k\}$.

\emph{Step 2:} Each ES $m$ collects the offloading decisions and resource request weights from all vehicles within its coverage and computes the actual resource allocation via proportional normalization~\eqref{eq:resource_alloc}. The allocated frequencies are returned to the respective vehicles.

\emph{Step 3:} Vehicles execute their offloading decisions with the allocated resources and compute the augmented reward $\tilde{r}_n(t)$ via Eq.~\eqref{eq:aug_reward}. The vehicle uploads this scalar to its serving ES with an ML-DSA signature for integrity verification (Section~\ref{sec:fl}). The environment returns the observation for the next time slot.

\emph{Step 4:} ES $m$ stores the experience tuple $(o_n, a_n, \tilde{r}_n, o_n')$ in its buffer. After vehicle $n$ accumulates $T_{\mathrm{batch}}$ consecutive time slots of experience within ES $m$'s coverage, the ES computes the Generalized Advantage Estimation (GAE)~\cite{schulman2016gae}:
\begin{equation}
\label{eq:gae}
    \hat{A}_n^{(t)} = \sum_{l=0}^{T_{\mathrm{batch}}-t-1}(\gamma\varsigma)^l \delta_n^{(t+l)},
\end{equation}
where $\delta_n^{(t)} = \tilde{r}_n(t) + \gamma V_{\boldsymbol{w}_m^c}(s_m(t+1)) - V_{\boldsymbol{w}_m^c}(s_m(t))$ is the temporal difference error, $\gamma$ is the discount factor, and $\varsigma$ is the GAE parameter. If a vehicle exits the coverage of ES $m$ before accumulating $T_{\mathrm{batch}}$ steps, the trajectory is truncated at the boundary and the Critic's value estimate $V_{\boldsymbol{w}_m^c}(s_m(t+1))$ is used as a bootstrap estimate for the remaining returns. The ES returns the computed advantage value $\hat{A}_n^{(t)}$ (a scalar) to the vehicle, also protected by an ML-DSA signature (Section~\ref{sec:fl}).

\emph{Step 5:} The vehicle updates its Actor parameters using a clipped surrogate objective~\cite{schulman2017ppo}:
\begin{equation}
\label{eq:clip}
    \scalebox{0.8}{$\mathcal{L}^{\mathrm{clip}}(\boldsymbol{w}_n) = -\mathbb{E}\left[\min\left(\rho_n(t)\hat{A}_n^{(t)},\ \mathrm{clip}(\rho_n(t), 1\!-\!\epsilon, 1\!+\!\epsilon)\hat{A}_n^{(t)}\right)\right].$}
\end{equation}
where $\rho_n(t) = \frac{\pi_{\boldsymbol{w}_n}(a_n(t) \mid o_n(t))}{\pi_{\boldsymbol{w}_n^{\mathrm{old}}}(a_n(t) \mid o_n(t))}$ is the importance sampling ratio and $\epsilon$ is the clipping parameter. The clipping mechanism constrains the magnitude of policy updates, which ensures training stability and makes the policy robust to variations in Critic estimates when vehicles cross ES coverage boundaries.

\emph{Step 6:} ES $m$ updates its Critic parameters using the collected experience:
\begin{equation}\label{eq:critic}
    \mathcal{L}^{\mathrm{Critic}}(\boldsymbol{w}_m^c) = \mathbb{E}\left[\left(V_{\boldsymbol{w}_m^c}(s_m(t)) - \hat{R}_n^{(t)}\right)^2\right],
\end{equation}
where $\hat{R}_n^{(t)} = \hat{A}_n^{(t)} + V_{\boldsymbol{w}_m^{c,\mathrm{old}}}(s_m(t))$ is the target return.

When a vehicle is not within the coverage of any ES, its action space degenerates to executing all subtasks locally, and the training step is skipped.

\subsection{Secure Federated Actor Aggregation}
\label{sec:fl}

Although the CTDE-based multi-agent training enables effective local policy learning within each ES region, independent training across different regions may lead to policy fragmentation, where vehicles in high-density areas, low-density areas, urban roads, and highways learn significantly different strategies. To address this, CRAFT introduces FL for cross-region knowledge transfer. FL aggregates only the Actor parameters and not the Critic, since the Critic operates on global states whose semantics and dimensions vary across ES regions, making aggregation meaningless.

\textbf{Two-Tier Aggregation Structure.} Every $K$ time slots, the system performs one round of FL aggregation consisting of two tiers.

The first tier is intra-ES aggregation. ES $m$ collects Actor parameters from the set of participating vehicles $\mathcal{N}_m^{\mathrm{FL}}(t)$ within its coverage and computes locally aggregated model:
\begin{equation}
\label{eq:tier1_agg}
    \boldsymbol{w}_m^{\mathrm{local}} = \sum_{n \in \mathcal{N}_m^{\mathrm{FL}}(t)} \frac{|\mathcal{D}_n|}{\sum_{n' \in \mathcal{N}_m^{\mathrm{FL}}(t)}|\mathcal{D}_{n'}|} \boldsymbol{w}_n,
\end{equation}
where $|\mathcal{D}_n|$ is the number of experience samples collected by vehicle $n$ since the last aggregation round.

The second tier is inter-ES aggregation. A trusted global aggregator collects the locally aggregated results from all ESs via the backhaul network to compute the global model:
\begin{equation}
\label{eq:tier2_agg}
    \boldsymbol{w}^{\mathrm{global}} = \sum_{m \in \mathcal{M}} \frac{|\mathcal{N}_m^{\mathrm{FL}}(t)|}{\sum_{m' \in \mathcal{M}}|\mathcal{N}_{m'}^{\mathrm{FL}}(t)|} \boldsymbol{w}_m^{\mathrm{local}}.
\end{equation}

The global model is then distributed to vehicles as the initial parameters for the next round of local training.

\textbf{Constraint-Aware Non-IID Federated Aggregation.}
In VEC scenarios, experience data collected across different regions exhibit significant non-IID characteristics, with heterogeneity arising from three dimensions. First, \emph{spatial heterogeneity}: traffic density variations across regions lead to substantially different state space distributions, e.g., urban centers feature dense vehicle populations with intense resource contention, whereas peripheral road segments have sparse vehicles with abundant ES resources. Second, \emph{task distribution heterogeneity}: the dominant task types differ across regions, e.g., commercial districts are dominated by augmented reality navigation tasks while highway segments primarily generate autonomous driving perception tasks, causing systematic bias in the task characteristic dimensions of each vehicle's collected experience. Third, \emph{temporal non-stationarity}: vehicles continuously enter and exit the coverage of different ESs, causing participants in each aggregation round to change dynamically, further exacerbating the instability of data distributions.

Under these strong non-IID conditions, standard Federated Averaging (FedAvg) aggregation allows local models to deviate substantially from the global model between aggregation rounds, destabilizing the quality of the aggregated global model. We adopt the proximal regularization mechanism from Federated Proximal (FedProx)\cite{li2020fedprox} to address this challenge, rather than alternative approaches such as Stochastic Controlled Averaging for Federated Learning (SCAFFOLD) or Federated Normalized Averaging (FedNova), based on the following considerations. SCAFFOLD requires transmitting additional control variates in each aggregation round, doubling the V2I uplink communication volume, which incurs prohibitive overhead in bandwidth-constrained vehicular networks. FedNova requires explicit normalization over the number of local update steps across participants, but in VEC scenarios, vehicles' local update counts naturally vary due to mobility and are difficult to predict, resulting in high implementation complexity. FedProx requires only adding a proximal term to local loss function with no additional communication overhead, and inherently tolerates heterogeneous local update steps.

Incorporating the proximal term into CRAFT's local training objective yields the complete loss function:
\begin{equation}\label{eq:local_loss}
    \mathcal{L}^{\mathrm{local}}(\boldsymbol{w}_n) = \mathcal{L}^{\mathrm{clip}}(\boldsymbol{w}_n) + \frac{\mu}{2}\|\boldsymbol{w}_n - \boldsymbol{w}^{\mathrm{global}}\|^2,
\end{equation}
where $\mathcal{L}^{\mathrm{clip}}$ is the clipped surrogate objective based on the augmented reward $\tilde{r}_n(t)$ (which already incorporates the Lagrangian penalty described in Section~\ref{sec:lagrangian}), and $\mu > 0$ is the proximal coefficient.

Notably, within CRAFT's constrained optimization framework, the proximal term serves a dual role. Its first role is the classical non-IID drift control: by anchoring local models near the global model, it prevents policies across different regions from diverging into mutually incompatible local optima due to heterogeneous experience data. Its second role is acting as a stabilizer for the Lagrangian dynamic adjustment. In CRAFT, the Lagrangian multiplier $\lambda$ is updated dynamically based on the task completion rate: when the completion rate falls below the threshold, $\lambda$ increases to impose heavier penalties on task failures. However, a sudden increase in $\lambda$ alters the optimization objective of all vehicles through the augmented reward $\tilde{r}_n(t)$, driving all local models to simultaneously shift toward conservative strategies with large parameter updates. Without the proximal term, such a global policy shift would cause local models to deviate substantially from the current global model before aggregation, resulting in severe oscillations in the aggregated global model. The proximal term constrains the magnitude of parameter changes during each round of local updates, imposing a damping effect on the policy adjustments triggered by $\lambda$ variations and ensuring a smooth constraint adaptation process. The effectiveness of this dual role is validated through the ablation study in Section~\ref{sec:experiments}.

\textbf{Post-Quantum Security Protection.}
Throughout CRAFT's training pipeline, data is continuously exchanged among vehicles, ESs, and the global aggregator. These communication links face two categories of security threats: (i) \emph{privacy leakage}, where honest-but-curious ESs or external eavesdroppers may infer vehicle driving behavioral patterns and preferences from plaintext model parameters; and (ii) \emph{integrity violation}, where man-in-the-middle attackers can intercept and tamper with data in transit, e.g., injecting noise into model parameters or systematically corrupting advantage values to induce cumulative policy drift that later propagates through federated aggregation. CRAFT addresses these threats through a layered post-quantum security design: CKKS protects the confidentiality of model parameters during aggregation and enables computation over encrypted data, while ML-DSA digital signatures protect the integrity of all exchanged data. Both primitives are built on lattice-based hardness assumptions, yielding a fully post-quantum secure training pipeline.

\emph{Key setup and enrollment.} We assume a standard vehicular Public Key Infrastructure (PKI), as adopted in Vehicle-to-Everything (V2X) security credential management systems~\cite{yoshizawa2023survey}: a trusted, offline Certificate Authority (CA) enrolls each legitimate vehicle and issues a certificate binding the vehicle's ML-DSA public key $\mathit{pk}_i$ to its authorized identity. Each entity signs with its own private key $\mathit{sk}_i$, and a recipient accepts a message only if (i) the accompanying certificate is valid and unrevoked under the CA, and (ii) the ML-DSA signature verifies under the certified $\mathit{pk}_i$. Consequently, a self-generated key pair without a valid CA certificate is rejected, and impersonating an enrolled vehicle requires its private key, which never leaves the vehicle. The trusted global aggregator additionally holds a single CKKS key pair $(\mathit{pk}^{\mathrm{HE}}, \mathit{sk}^{\mathrm{HE}})$, distributing $\mathit{pk}^{\mathrm{HE}}$ to all enrolled vehicles and retaining $\mathit{sk}^{\mathrm{HE}}$ as the sole decryption key.

\emph{Confidentiality: CKKS Homomorphic Encryption.}
Conventional encryption schemes (e.g., AES) protect parameters only against network eavesdroppers: since one cannot compute over their ciphertexts, the edge server must decrypt before aggregating, exposing every vehicle's plaintext parameters to an honest-but-curious server. CKKS instead supports computation over ciphertexts, so the two-tier weighted averaging in~\eqref{eq:tier1_agg}--\eqref{eq:tier2_agg} runs entirely in the encrypted domain and no edge server ever observes individual plaintext parameters, providing confidentiality against both eavesdroppers and edge servers. Beyond this homomorphic property, the selection of CKKS~\cite{cheon2017ckks} is motivated by three further considerations. First, neural network parameters are floating-point values, and CKKS natively supports approximate arithmetic over real numbers, whereas exact HE schemes such as Brakerski/Fan-Vercauteren (BFV) and Brakerski-Gentry-Vaikuntanathan (BGV)~\cite{fan2012somewhat, brakerski2014leveled} require additional fixed-point quantization that introduces unnecessary precision loss and implementation complexity. Second, compared with Differential Privacy, CKKS does not inject artificial noise into model parameters, thereby avoiding the privacy-accuracy trade-off. Third, compared with Secure Aggregation protocols, CKKS does not require all participants to be simultaneously online, making it better suited for VEC scenarios where vehicles dynamically enter and exit ES coverage.

Since CKKS supports addition and scalar multiplication on ciphertexts, the two-tier weighted averaging defined in~\eqref{eq:tier1_agg} and~\eqref{eq:tier2_agg} can be performed entirely in the ciphertext domain. In each FL aggregation round, every participating vehicle encrypts its Actor parameters under the aggregator's public key $\mathit{pk}^{\mathrm{HE}}$: $\boldsymbol{e}_n = \mathrm{CKKS\_Enc}(\mathit{pk}^{\mathrm{HE}}, \boldsymbol{w}_n)$. The ES and the global aggregator then complete the two-tier aggregation in the ciphertext domain:
\begin{align}
\label{eq:cipher_agg_tier1}
    \boldsymbol{e}_m^{\mathrm{local}} &= \sum_{n \in \mathcal{N}_m^{\mathrm{FL}}} \frac{|\mathcal{D}_n|}{\sum_{n' \in \mathcal{N}_m^{\mathrm{FL}}}|\mathcal{D}_{n'}|} \cdot \boldsymbol{e}_n, \\
\label{eq:cipher_agg_tier2}
    \boldsymbol{e}^{\mathrm{global}} &= \sum_{m \in \mathcal{M}} \frac{|\mathcal{N}_m^{\mathrm{FL}}|}{\sum_{m' \in \mathcal{M}}|\mathcal{N}_{m'}^{\mathrm{FL}}|} \cdot \boldsymbol{e}_m^{\mathrm{local}}.
\end{align}
The global aggregator decrypts $\boldsymbol{e}^{\mathrm{global}}$ with its private key $\mathit{sk}^{\mathrm{HE}}$ to recover the plaintext aggregation result of~\eqref{eq:tier1_agg}--\eqref{eq:tier2_agg}. Throughout this process, ESs operate solely on ciphertexts and can never access plaintext parameters, thereby preserving individual vehicles' parameter privacy.

\emph{Integrity: ML-DSA Post-Quantum Digital Signatures.} CRAFT deploys ML-DSA digital signatures on all communication links to protect data integrity. ML-DSA is a lattice-based post-quantum signature scheme built on the Module Learning with Errors (MLWE) problem, sharing the same cryptographic foundation as CKKS. ML-DSA is chosen over classical ones (e.g., Elliptic Curve Digital Signature Algorithm (ECDSA)), preventing overall security from being downgraded by the quantum vulnerability of a classical signature scheme.

The signing protocol applies uniformly to all transmissions. For any payload $\mathbf{z}$ transmitted from entity $i$ to another entity, the sender holding its private key $\mathit{sk}_i$ computes the signature
\begin{equation}
    \sigma^{\mathbf{z}} = \mathrm{ML\text{-}DSA\_Sign}\!\left(\mathit{sk}_i,\, \mathcal{H}(\mathbf{z})\right),
\end{equation}
and transmits $(\mathbf{z},\, \sigma^{\mathbf{z}})$. Here, $\mathcal{H}(\cdot)$ denotes a cryptographic hash function (e.g., Secure Hash Algorithm 3 (SHA-3)). The receiver, holding the sender's certified public key $\mathit{pk}_i$, verifies before use:
\begin{equation}
    \mathrm{ML\text{-}DSA\_Verify}\!\left(\mathit{pk}_i,\, \mathcal{H}(\mathbf{z}),\, \sigma^{\mathbf{z}}\right) \stackrel{?}{=} \mathrm{true},
\end{equation}
and accepts $\mathbf{z}$ if and only if the sender's certificate is valid and unrevoked and the signature verification succeeds; otherwise the message is discarded.

This design fully decouples confidentiality from integrity: CKKS ensures that ESs cannot observe individual vehicles' plaintext parameters, while ML-DSA ensures that no exchanged data has been tampered with.

\textbf{Lagrangian Multiplier Synchronization.} During each FL aggregation round, the global aggregator updates $\lambda$ based on the average task success rate $\bar{s}$ over the past $K$ time slots reported by all ESs, and broadcasts the updated $\lambda$ together with the global Actor parameters, incurring no additional communication rounds.

\subsection{Overall Procedure and Analysis}
\label{sec:analysis} 

Algorithm~\ref{alg:decision} details the per-slot decision and training procedure described in Sections~\ref{sec:cmdp}--\ref{sec:mappo}, and Algorithm~\ref{alg:aggregation} details the periodic federated aggregation procedure described in Section~\ref{sec:fl}.

\begin{algorithm}[t]
\caption{CRAFT Decision and Training (Per Time Slot)}
\label{alg:decision}
\footnotesize
\begin{algorithmic}[1]
\REQUIRE $\omega_1, \omega_2, \theta, T_{\mathrm{batch}}, \epsilon, \gamma, \varsigma, \mu$
\STATE Initialize $\boldsymbol{w}_n\ \forall n$, $\boldsymbol{w}_m^c\ \forall m$, $\lambda \leftarrow 0$
\STATE \textcolor{gray}{\textit{// All transmissions are ML-DSA signed and verified against CA-issued certificates; messages failing verification are discarded.}}
\FOR{$t = 1, 2, \ldots, T$}
    \STATE \textcolor{gray}{\textit{// Phase 1: Action sampling}}
    \FOR{each vehicle $n \in \mathcal{N}$ \textbf{in parallel}}
        \STATE Observe $o_n(t)$
        \IF{$\mathcal{M}_n(t) = \emptyset$}
            \STATE $a_k \leftarrow 0,\ \forall k \in \mathcal{K}_n$
        \ELSE
            \STATE $a_n(t) \sim \pi_{\boldsymbol{w}_n}(\cdot \mid o_n(t))$; submit $\{a_k, \beta_k\}$ to serving ES
        \ENDIF
    \ENDFOR
    \STATE \textcolor{gray}{\textit{// Phase 2: ES-side resource allocation}}
    \FOR{each ES $m \in \mathcal{M}$}
        \STATE Allocate $f_{k,m}^{\mathrm{edge}}(t)$ via Eq.~\eqref{eq:resource_alloc}; return to vehicles
    \ENDFOR
    \STATE \textcolor{gray}{\textit{// Phase 3: Execution and reward computation}}
    \FOR{each vehicle $n$ with $\mathcal{M}_n(t) \neq \emptyset$ \textbf{in parallel}}
        \STATE Execute offloading; compute $D_n^{\mathrm{DAG}}(t)$, $E_n^{\mathrm{total}}(t)$, $s_n(t)$
        \STATE Compute augmented reward $\tilde{r}_n(t)$ via Eq.~\eqref{eq:aug_reward}
        \STATE Upload signed $\tilde{r}_n(t)$ to serving ES
    \ENDFOR
    \STATE \textcolor{gray}{\textit{// Phase 4: Critic update and advantage estimation}}
    \FOR{each ES $m \in \mathcal{M}$}
        \STATE Store $(o_n, a_n, \tilde{r}_n, o_n')$ for all $n \in \mathcal{N}_m(t)$
        \FOR{each $n$ with $T_{\mathrm{batch}}$ consecutive samples}
            \STATE Compute $\hat{A}_n^{(t)}$ via GAE~\eqref{eq:gae}; set $\hat{R}_n^{(t)} \leftarrow \hat{A}_n^{(t)} + V_{\boldsymbol{w}_m^{c,\mathrm{old}}}(s_m(t))$
            \STATE Update Critic $\boldsymbol{w}_m^c$ via Eq.~\eqref{eq:critic}; send $\hat{A}_n^{(t)}$ to vehicle $n$
        \ENDFOR
    \ENDFOR
    \STATE \textcolor{gray}{\textit{// Phase 5: Actor update}}
    \FOR{each vehicle $n$ that received $\hat{A}_n^{(t)}$}
        \STATE Update Actor $\boldsymbol{w}_n$ by minimizing $\mathcal{L}^{\mathrm{local}}(\boldsymbol{w}_n)$ in Eq.~\eqref{eq:local_loss}
    \ENDFOR
    \IF{$t \bmod K = 0$}
        \STATE Call Algorithm~\ref{alg:aggregation}
    \ENDIF
\ENDFOR
\end{algorithmic}
\end{algorithm}

\begin{algorithm}[t]
\caption{Secure Federated Aggregation (Every $K$ Slots)}
\label{alg:aggregation}
\footnotesize
\begin{algorithmic}[1]
\REQUIRE Aggregation period $K$, multiplier learning rate $\eta_\lambda$
\STATE \textcolor{gray}{\textit{// Tier 1: Intra-ES aggregation (ciphertext domain)}}
\FOR{each ES $m \in \mathcal{M}$}
    \STATE $\mathcal{N}_m^{\mathrm{FL}} \leftarrow \mathcal{N}_m(t)$
    \FOR{each $n \in \mathcal{N}_m^{\mathrm{FL}}$}
        \STATE $\boldsymbol{e}_n \leftarrow \mathrm{CKKS\_Enc}(\mathit{pk}^{\mathrm{HE}}, \boldsymbol{w}_n)$
        \STATE $\sigma_n^{\mathrm{up}} \leftarrow \mathrm{ML\text{-}DSA\_Sign}(\mathit{sk}_n, \mathcal{H}(\boldsymbol{e}_n))$; upload $(\boldsymbol{e}_n,\, \sigma_n^{\mathrm{up}})$
    \ENDFOR
    \FOR{each $n \in \mathcal{N}_m^{\mathrm{FL}}$}
        \IF{$n$'s certificate is invalid \textbf{or} ML-DSA verification fails}
            \STATE Discard $\boldsymbol{e}_n$; remove $n$ from $\mathcal{N}_m^{\mathrm{FL}}$
        \ENDIF
    \ENDFOR
    \STATE Compute $\boldsymbol{e}_m^{\mathrm{local}}$ via homomorphic aggregation~\eqref{eq:cipher_agg_tier1}
\ENDFOR
\STATE \textcolor{gray}{\textit{// Tier 2: Inter-ES aggregation}}
\STATE Global aggregator collects $\{\boldsymbol{e}_m^{\mathrm{local}}\}_{m \in \mathcal{M}}$
\STATE Compute $\boldsymbol{e}^{\mathrm{global}}$ via homomorphic aggregation~\eqref{eq:cipher_agg_tier2}
\STATE $\boldsymbol{w}^{\mathrm{global}} \leftarrow \mathrm{CKKS\_Dec}(\mathit{sk}^{\mathrm{HE}}, \boldsymbol{e}^{\mathrm{global}})$
\STATE \textcolor{gray}{\textit{// Lagrangian multiplier synchronization}}
\STATE $\bar{s} \leftarrow \frac{1}{NK}\sum_{t'=t-K+1}^{t}\sum_{n=1}^{N} s_n(t')$; update $\lambda$ via Eq.~\eqref{eq:lambda_update}
\STATE \textcolor{gray}{\textit{// ML-DSA signed broadcast for downlink integrity}}
\STATE $\sigma^{\mathrm{sig}} \leftarrow \mathrm{ML\text{-}DSA\_Sign}(\mathit{sk}_{\mathrm{agg}}, \mathcal{H}(\boldsymbol{w}^{\mathrm{global}} \| \lambda))$
\STATE Broadcast $(\boldsymbol{w}^{\mathrm{global}},\, \lambda,\, \sigma^{\mathrm{sig}})$ to all vehicles
\FOR{each recipient}
    \IF{$\mathrm{agg}$'s certificate is invalid \textbf{or} ML-DSA verification fails}
        \STATE Discard message; retain previous $\boldsymbol{w}^{\mathrm{global}}$ and $\lambda$
    \ENDIF
\ENDFOR
\end{algorithmic}
\end{algorithm}

\textbf{Convergence and Stability Analysis.}
CRAFT trains on two interleaved timescales that are intrinsic to its design. On the fast timescale, each vehicle performs, every slot, a proximally regularized policy update on the augmented reward $\tilde r=r-\lambda c$. On the slow timescale, every $K$ slots the dual variable $\lambda$ ascends on the empirical completion-rate violation and the local Actors are federated. The multiplier therefore varies slowly relative to the policy and defines a constrained-optimal \emph{target} that the federated Actors must track, while the proximal coefficient $\mu$ governs both the tightness of that tracking and the damping of non-IID dispersion. Our analysis follows this structure: Proposition~\ref{prop:constraint} expresses the average constraint violation as a dual-schedule part plus a single primal tracking residual; Proposition~\ref{prop:policy} bounds how fast the slow timescale moves the target; and Proposition~\ref{prop:federated} bounds the fast-timescale tracking error, where the affine-in-$\lambda$ structure of $\tilde r$ makes one coefficient $\mu$ control two distinct error sources. As CRAFT's policies are non-convex parametrizations optimized inexactly over stochastic, non-IID experience, the results are conditional stability bounds with an explicit optimization floor, not unconditional global convergence.

\begin{proposition}[Dual-Tracking Constraint Bound]
\label{prop:constraint}
Let $g^{(j)}=\theta-\bar{s}^{(j)}$ be the empirical completion-rate violation at the $j$-th multiplier-update round, $j=1,\ldots,J$, with $g^{(j)}\le 0$ meaning the constraint is met in that round's average. Since $c_n(t)=\theta-s_n(t)\in[\theta-1,\theta]$, we have $|g^{(j)}|\le C_c$ with $C_c=\max\{\theta,1-\theta\}$. Under the projected dual update $\lambda^{(j+1)}=\Pi_{[0,\lambda^{\max}]}\!\big(\lambda^{(j)}+\eta_\lambda g^{(j)}\big)$,
\begin{equation}
\resizebox{0.89\columnwidth}{!}{$\left[\frac{1}{J}\sum_{j=1}^{J}g^{(j)}\right]_+ \le \underbrace{\frac{\bar\rho}{\lambda^{\max}}}_{\text{primal tracking}} + \underbrace{\frac{\lambda^{\max}}{2\eta_\lambda J}+\frac{\eta_\lambda C_c^2}{2\lambda^{\max}}}_{\text{dual schedule}}, \quad \bar\rho:=\frac{1}{J}\Big[\sum_{j=1}^{J}\lambda^{(j)}g^{(j)}\Big]_+ .$}
\end{equation}
The dual-schedule terms vanish at rate $O(1/\sqrt{J})$ under $\eta_\lambda=\Theta(1/\sqrt{J})$. The remaining term $\bar\rho$ is the only primal-dependent quantity: it accumulates the penalty actually paid while the constraint is still violated, and is small precisely when the policy restores feasibility before $\lambda$ grows large, i.e.\ when the fast timescale tracks the slow one. We bound it through Propositions~\ref{prop:policy}--\ref{prop:federated} rather than assuming its order.
\end{proposition}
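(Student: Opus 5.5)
The plan is to treat the projected dual update as projected online gradient ascent on the linear functions $\lambda\mapsto\lambda g^{(j)}$ over $[0,\lambda^{\max}]$. I will use the standard one-step regret inequality against the fixed comparator $\lambda=\lambda^{\max}$. Nothing about the policy is used; all primal dependence is collected in $\bar\rho$. I take the initialization $\lambda^{(1)}=0$ from Algorithm~\ref{alg:decision}, and $|g^{(j)}|\le C_c$ as noted in the statement.

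\emph{Step 1 (one-step inequality).} Fix any $\lambda\in[0,\lambda^{\max}]$. Projection onto the convex set $[0,\lambda^{\max}]$ is nonexpansive and $\lambda$ lies in that set, so
\begin{equation*}
(\lambda^{(j+1)}-\lambda)^2\le(\lambda^{(j)}+\eta_\lambda g^{(j)}-\lambda)^2=(\lambda^{(j)}-\lambda)^2+2\eta_\lambda g^{(j)}(\lambda^{(j)}-\lambda)+\eta_\lambda^2 (g^{(j)})^2 .
\end{equation*}
Rearranging gives
\begin{equation*}
g^{(j)}(\lambda-\lambda^{(j)})\le\frac{(\lambda^{(j)}-\lambda)^2-(\lambda^{(j+1)}-\lambda)^2}{2\eta_\lambda}+\frac{\eta_\lambda (g^{(j)})^2}{2}.
\end{equation*}

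\emph{Step 2 (telescoping).} Sum over $j=1,\ldots,J$. Drop the nonpositive final term $-(\lambda^{(J+1)}-\lambda)^2/(2\eta_\lambda)$ and use $(g^{(j)})^2\le C_c^2$. Then choose $\lambda=\lambda^{\max}$, so the initial term is $(\lambda^{(1)}-\lambda^{\max})^2=(\lambda^{\max})^2$. This yields
\begin{equation*}
\lambda^{\max}\sum_{j=1}^J g^{(j)}\le\sum_{j=1}^J\lambda^{(j)}g^{(j)}+\frac{(\lambda^{\max})^2}{2\eta_\lambda}+\frac{\eta_\lambda J C_c^2}{2}.
\end{equation*}

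\emph{Step 3 (normalization and positive part).} Bound $\sum_j\lambda^{(j)}g^{(j)}\le\big[\sum_j\lambda^{(j)}g^{(j)}\big]_+=J\bar\rho$, then divide by $J\lambda^{\max}>0$. This gives $\frac1J\sum_j g^{(j)}$ bounded by the right-hand side of the proposition. That right-hand side is nonnegative, so the same bound holds for the positive part $[\cdot]_+$ of the left side, which is the claim.

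The rate statement follows by substituting $\eta_\lambda=c/\sqrt J$: the two dual-schedule terms become $\frac{\lambda^{\max}}{2c\sqrt J}+\frac{cC_c^2}{2\lambda^{\max}\sqrt J}=O(1/\sqrt J)$.

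The only delicate point is bookkeeping rather than analysis. The comparator must be the boundary value $\lambda^{\max}$, and this works only because the projection set is bounded and contains it. The $\lambda^{(1)}=0$ initialization must be used; with a general $\lambda^{(1)}$ the first dual-schedule term would read $(\lambda^{\max}-\lambda^{(1)})^2/(2\eta_\lambda J\lambda^{\max})$, which is still no larger than the stated term. Finally, no bound on $\bar\rho$ is attempted here; that is deferred to Propositions~\ref{prop:policy}--\ref{prop:federated}, as the statement indicates.
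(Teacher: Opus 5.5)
Your proposal is correct and follows the paper's proof essentially step for step. Both use nonexpansiveness of the projection against the comparator $\lambda^{\max}$, expand and telescope using $|g^{(j)}|\le C_c$ and $(\lambda^{(1)}-\lambda^{\max})^2\le(\lambda^{\max})^2$, then divide by $\lambda^{\max}J$ and pass to the positive part. You also state explicitly that the right-hand side is nonnegative, which is what justifies taking $[\cdot]_+$ on the left; the paper leaves this implicit.
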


\begin{proof}
By non-expansiveness of $\Pi_{[0,\lambda^{\max}]}$, for the comparator $\lambda^{\max}$, $\|\lambda^{(j+1)}-\lambda^{\max}\|^2\le\|\lambda^{(j)}+\eta_\lambda g^{(j)}-\lambda^{\max}\|^2$. Expanding, rearranging, summing over $j$, and using $|g^{(j)}|\le C_c$ together with $\|\lambda^{(1)}-\lambda^{\max}\|^2\le(\lambda^{\max})^2$ gives
\begin{equation}
\resizebox{0.89\columnwidth}{!}{$\lambda^{\max}\sum_{j=1}^{J}g^{(j)} \le \sum_{j=1}^{J}\lambda^{(j)}g^{(j)} +\frac{(\lambda^{\max})^2}{2\eta_\lambda} +\frac{\eta_\lambda J C_c^2}{2}.$}
\end{equation}
Dividing by $\lambda^{\max}J$ and taking the positive part, and noting $\bar\rho\le\tfrac1J\sum_j\lambda^{(j)}[g^{(j)}]_+$, yields the stated decomposition.
\end{proof}

\begin{proposition}[Slow-Timescale Target Drift]
\label{prop:policy}
For fixed $\lambda$ the augmented return $\tilde r_\lambda=r-\lambda c$ induces a stationary objective $J_\lambda$. Assume $|r(t)|\le R_{\max}$, $|c(t)|\le C_c$, and that the clipped PPO update keeps the empirical KL divergence between consecutive policies below $\delta_{\mathrm{KL}}$. Then within a $\lambda$-fixed interval
\begin{equation}
\big|J_\lambda(\pi^{(i+1)})-J_\lambda(\pi^{(i)})\big| \le C_{\mathrm{KL}}\sqrt{\delta_{\mathrm{KL}}}+\epsilon_{\mathrm{est}},
\end{equation}
with $C_{\mathrm{KL}}$ depending on $R_{\max}+\lambda^{\max}C_c$ and $\gamma$, and $\epsilon_{\mathrm{est}}$ collecting sampling, critic, and advantage-estimation error. Across a slow-timescale step, the projection gives $|\Delta\lambda^{(j)}|:=|\lambda^{(j+1)}-\lambda^{(j)}|\le\eta_\lambda|g^{(j)}|\le\eta_\lambda C_c$, so the value of any fixed policy $\pi$ moves by at most
\begin{equation}
\big|V^\pi_{\lambda^{(j+1)}}(s)-V^\pi_{\lambda^{(j)}}(s)\big| \le\frac{|\Delta\lambda^{(j)}|\,C_c}{1-\gamma} \le\frac{\eta_\lambda C_c^2}{1-\gamma}.
\end{equation}
Hence the constrained-optimal target drifts at per-round rate $O(\eta_\lambda)$, controllable through the dual stepsize.
\end{proposition}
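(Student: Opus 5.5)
The proposition has three claims, and I would prove them in order: a within-interval policy-improvement bound, a bound on how far one dual step moves $\lambda$, and a bound on how much that move changes the value of a fixed policy.

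\emph{Within-interval bound.} Fix $\lambda\in[0,\lambda^{\max}]$. The augmented per-step reward then satisfies $|\tilde r_\lambda|\le R_{\max}+\lambda^{\max}C_c=:\tilde R$, so every value and advantage function is bounded by $\tilde R/(1-\gamma)$. I would start from the performance difference lemma,
\begin{equation}
J_\lambda(\pi^{(i+1)})-J_\lambda(\pi^{(i)})=\frac{1}{1-\gamma}\,\mathbb{E}_{s\sim d^{\pi^{(i+1)}}}\mathbb{E}_{a\sim\pi^{(i+1)}}\big[A^{\pi^{(i)}}_\lambda(s,a)\big].
\end{equation}
Since $\mathbb{E}_{a\sim\pi^{(i)}}[A^{\pi^{(i)}}_\lambda]=0$, the inner expectation is at most $\|A\|_\infty\cdot 2\,\mathrm{TV}(\pi^{(i+1)},\pi^{(i)})$. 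Pinsker's inequality bounds this by $\|A\|_\infty\sqrt{2\delta_{\mathrm{KL}}}$. The change in the state distribution $d^{\pi}$ is handled the same way, using the standard bound $\|d^{\pi'}-d^{\pi}\|_1\le \frac{\gamma}{1-\gamma}\,2\max_s\mathrm{TV}$ from the CPO/TRPO analysis. Together these give $C_{\mathrm{KL}}=O\big(\tilde R/(1-\gamma)^2\big)$. One caveat: the assumption controls only the \emph{empirical} KL, whereas Pinsker needs the true (state-wise maximal) divergence. The gap between the two, together with the fact that the update is driven by estimated advantages $\hat A$ rather than true ones (critic bias, GAE truncation at coverage boundaries, finite samples), is exactly what gets absorbed into $\epsilon_{\mathrm{est}}$. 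I expect this bookkeeping to be the main obstacle. I would handle it by stating $\epsilon_{\mathrm{est}}$ as an explicit additive residual, not by deriving rates for it, which the statement permits.

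\emph{Multiplier drift.} Projection onto an interval is $1$-Lipschitz, and $\lambda^{(j)}$ already lies in $[0,\lambda^{\max}]$, so it is a fixed point of the projection. Hence
\begin{equation}
|\lambda^{(j+1)}-\lambda^{(j)}|=\big|\Pi(\lambda^{(j)}+\eta_\lambda g^{(j)})-\Pi(\lambda^{(j)})\big|\le\eta_\lambda|g^{(j)}|\le\eta_\lambda C_c,
\end{equation}
where the last step uses $|g^{(j)}|\le C_c$ from Proposition~\ref{prop:constraint}.

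\emph{Value shift for a fixed policy.} For fixed $\pi$, the value is affine in $\lambda$: $V^\pi_\lambda(s)=V^\pi_r(s)-\lambda V^\pi_c(s)$, where $V^\pi_c(s)=\mathbb{E}_\pi\big[\sum_t\gamma^t c(t)\mid s\big]$. Because $|c(t)|\le C_c$, we have $|V^\pi_c(s)|\le C_c/(1-\gamma)$. Subtracting the values at $\lambda^{(j+1)}$ and $\lambda^{(j)}$ leaves $|\Delta\lambda^{(j)}|\,|V^\pi_c(s)|$, which gives the first inequality; the previous step then gives $\eta_\lambda C_c^2/(1-\gamma)$.

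\emph{Target drift.} Finally, the value shift transfers to the optimum. Since $|\sup_\pi f-\sup_\pi g|\le\sup_\pi|f-g|$, the uniform bound over policies carries over to the optimal value $\max_\pi V^\pi_\lambda$, i.e.\ to the constrained-optimal target. The target therefore moves by $O(\eta_\lambda)$ per round.
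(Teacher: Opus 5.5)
Your proposal is correct and follows the paper's proof step for step. It uses the performance-difference lemma plus Pinsker for the within-interval bound, non-expansiveness of $\Pi_{[0,\lambda^{\max}]}$ for $|\Delta\lambda^{(j)}|$, and the affine-in-$\lambda$ geometric sum for the fixed-policy value shift; your $\sup$-difference step makes explicit the target-drift conclusion the paper leaves implicit. Two small quibbles, neither of which affects $C_{\mathrm{KL}}=O(\tilde R/(1-\gamma)^2)$: the separate state-distribution-shift step is redundant, because the performance-difference lemma is already taken under $d^{\pi^{(i+1)}}$, and the sup-norm advantage bound is $2\tilde R/(1-\gamma)$, not $\tilde R/(1-\gamma)$.
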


\begin{proof}
For the first bound, $J_\lambda$ is stationary for fixed $\lambda$ and its rewards are bounded by $R_{\max}+\lambda^{\max}C_c$; Pinsker's inequality bounds the total-variation distance between consecutive policies by $\sqrt{\delta_{\mathrm{KL}}/2}$, and the performance-difference lemma converts this into an objective gap of order $\sqrt{\delta_{\mathrm{KL}}}$ up to $\epsilon_{\mathrm{est}}$. For the second, $|\Delta\lambda^{(j)}|\le\eta_\lambda|g^{(j)}|$ by non-expansiveness, and for any fixed $\pi$,
\begin{equation}
\resizebox{0.89\columnwidth}{!}{$\big|V^\pi_{\lambda^{(j+1)}}(s)-V^\pi_{\lambda^{(j)}}(s)\big| =\Big|\mathbb{E}_\pi\!\sum_{\ell=0}^{\infty}\gamma^\ell(\lambda^{(j)}-\lambda^{(j+1)})c(t+\ell)\Big| \le\frac{|\Delta\lambda^{(j)}|C_c}{1-\gamma}.$}
\end{equation}
\end{proof}

\begin{proposition}[Dual-Role Proximal Tracking Bound]
\label{prop:federated}
Because $\tilde r=r-\lambda c$ is affine in $\lambda$ and GAE is a $(\gamma\varsigma)$-weighted sum of one-step temporal-difference errors that is linear in the per-step reward, the augmented advantage entering the surrogate splits as a reward part plus a $\lambda$-scaled cost part. Consequently the clipped-surrogate stochastic gradient obeys
\begin{equation}
\big\|\nabla\mathcal{L}_n^{\mathrm{clip}}(\boldsymbol{w};\lambda,\xi)\big\|\le G_r+\lambda\,G_c,
\end{equation}
where $G_r$ and $G_c$ bound the reward- and cost-induced gradient magnitudes. Let the round-$p$ local objective be $\mathcal{L}_n^{\mathrm{local}}(\boldsymbol{w})=\mathcal{L}_n^{\mathrm{clip}}(\boldsymbol{w};\lambda^{(p)})+\tfrac{\mu}{2}\|\boldsymbol{w}-\boldsymbol{w}^{(p)}\|^2$. If each vehicle takes $K$ local steps with stepsize $\eta_w$ and $0<\eta_w\mu<1$,
\begin{equation}
\resizebox{0.89\columnwidth}{!}{$\mathbb{E}\big\|\boldsymbol{w}_n^{(p,K)}-\boldsymbol{w}^{(p)}\big\| \le\frac{(G_r+\lambda^{(p)}G_c)\big(1-(1-\eta_w\mu)^K\big)}{\mu} \le\underbrace{\frac{G_r}{\mu}}_{\text{non-IID}}+\underbrace{\frac{\lambda^{(p)}G_c}{\mu}}_{\lambda\text{-shock}},$}
\end{equation}
and the change in drift attributable to a multiplier update is at most $G_c|\Delta\lambda^{(p)}|/\mu\le G_c\eta_\lambda C_c/\mu$. A single coefficient $\mu$ thus damps both heterogeneity-driven dispersion ($G_r/\mu$) and dual-induced oscillation ($\lambda^{(p)}G_c/\mu$), which is the formal basis of the proximal term's dual role.
\end{proposition}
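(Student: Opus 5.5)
Three pieces need proving, in this order: the split of the augmented advantage and the resulting gradient bound, a contraction recursion for the local drift under the proximal term, and the bound on the drift change caused by a multiplier update. The argument is a standard proximal-SGD drift bound, with the affine dependence on $\lambda$ carried through explicitly.

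\emph{Step 1: splitting the advantage.} Start from $\delta_n^{(t)}=\tilde r_n(t)+\gamma V(s_m(t+1))-V(s_m(t))$ and treat the critic as two heads, or equivalently note that the value target is linear in the reward. Write $\tilde r=r-\lambda c$, so that $V=V^r-\lambda V^c$. Because GAE in \eqref{eq:gae} is a fixed linear combination of the $\delta$'s, this gives $\hat A=\hat A^r-\lambda\hat A^c$. The clipped surrogate's gradient is nonzero only on the unclipped branch, where it equals $\nabla\rho\,\hat A$. The triangle inequality then yields $\|\nabla\mathcal L^{\mathrm{clip}}\|\le \|\nabla\rho\,\hat A^r\|+\lambda\|\nabla\rho\,\hat A^c\|$. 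Define $G_r$ and $G_c$ as uniform bounds on these two terms. This requires bounded score functions and bounded advantages; the latter follows from $|r|\le R_{\max}$, $|c|\le C_c$, and a bounded critic, as in Proposition~\ref{prop:policy}.

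\emph{Step 2: contraction recursion.} Write $d_k=\boldsymbol w_n^{(p,k)}-\boldsymbol w^{(p)}$. One local step reads $d_{k+1}=d_k-\eta_w\big(\nabla\mathcal L^{\mathrm{clip}}_n(\boldsymbol w_n^{(p,k)};\lambda^{(p)},\xi_k)+\mu d_k\big)=(1-\eta_w\mu)d_k-\eta_w\nabla\mathcal L^{\mathrm{clip}}_n$. Take norms, use $0<\eta_w\mu<1$, and apply Step 1 to get $\|d_{k+1}\|\le(1-\eta_w\mu)\|d_k\|+\eta_w(G_r+\lambda^{(p)}G_c)$. Starting from $d_0=0$ and unrolling over $K$ steps gives a geometric sum: $\|d_K\|\le\eta_w(G_r+\lambda^{(p)}G_c)\sum_{i=0}^{K-1}(1-\eta_w\mu)^i=(G_r+\lambda^{(p)}G_c)(1-(1-\eta_w\mu)^K)/\mu$. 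The bound holds almost surely, so it also holds in expectation. Dropping the factor $1-(1-\eta_w\mu)^K\le1$ gives the split into the non-IID term $G_r/\mu$ and the $\lambda$-shock term $\lambda^{(p)}G_c/\mu$.

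\emph{Step 3: drift change under a multiplier update.} The bound from Step 2 is affine in $\lambda$, so changing $\lambda^{(p)}$ to $\lambda^{(p)}+\Delta\lambda^{(p)}$ changes it by at most $G_c|\Delta\lambda^{(p)}|/\mu$. A stronger, trajectory-level version couples two runs that use the same samples $\xi_k$ but different multipliers. Their difference obeys the same recursion with forcing term $\eta_w|\Delta\lambda|G_c$. This holds provided the cost-gradient part is evaluated at common iterates; otherwise a Lipschitz term must be absorbed into the contraction, which needs $L\le\mu$. Finally, apply Proposition~\ref{prop:policy}'s bound $|\Delta\lambda^{(p)}|\le\eta_\lambda C_c$.

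\emph{Main obstacle.} The delicate step is Step 1. The clip makes the surrogate piecewise, and the gradient bound needs bounded $\|\nabla\log\pi\|$ together with a critic-induced advantage split that is exactly linear in $\lambda$. If the critic is trained on $\tilde r$ as a single head, the split holds only up to critic error. In that case I would fold the discrepancy into $G_r$ and state the decomposition at the level of the bound rather than exact equality. For the trajectory coupling in Step 3, I would first try to settle for the affine-in-$\lambda$ bound difference, since that is all the statement strictly requires.
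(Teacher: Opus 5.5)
Your proposal is correct and follows the paper's proof essentially step for step. The matching steps are: the GAE-linearity split of the augmented advantage; the triangle inequality on the active clipped branch, with critic-approximation error absorbed into the constants (the paper puts it into both $G_r$ and $G_c$ rather than into $G_r$ alone); the recursion $\boldsymbol{d}_{k+1}=(1-\eta_w\mu)\boldsymbol{d}_k-\eta_w\nabla\mathcal{L}_n^{\mathrm{clip}}$ unrolled from $\boldsymbol{d}_0=0$ into the geometric sum; and differencing the affine-in-$\lambda$ bound using $|\Delta\lambda^{(p)}|\le\eta_\lambda C_c$.

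Your Step~3 fallback of differencing the bound is exactly what the paper does, so the trajectory-level coupling is unnecessary. As sketched, absorbing a Lipschitz term $L$ would in fact give $G_c|\Delta\lambda^{(p)}|/(\mu-L)$ and require strict $L<\mu$, rather than the stated $1/\mu$ constant.
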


\begin{proof}
With $\tilde r=r-\lambda c$, the GAE return decomposes by linearity into a reward return minus $\lambda$ times a cost return; the clip threshold depends only on the importance ratio, so on any active branch the surrogate gradient is proportional to this advantage, and the triangle inequality gives $\|\nabla\mathcal{L}_n^{\mathrm{clip}}(\boldsymbol{w};\lambda,\xi)\|\le G_r+\lambda G_c$, with critic-approximation error absorbed into $G_r,G_c$. Writing $\boldsymbol{d}_k=\boldsymbol{w}_n^{(p,k)}-\boldsymbol{w}^{(p)}$, a proximal step gives $\boldsymbol{d}_{k+1}=(1-\eta_w\mu)\boldsymbol{d}_k-\eta_w\nabla\mathcal{L}_n^{\mathrm{clip}}(\boldsymbol{w}_n^{(p,k)};\lambda^{(p)},\xi)$. Taking norms and expectations, using the gradient bound and $\boldsymbol{d}_0=0$,
\begin{equation}
\resizebox{0.89\columnwidth}{!}{$\mathbb{E}\|\boldsymbol{d}_K\| \le\eta_w(G_r+\lambda^{(p)}G_c)\sum_{q=0}^{K-1}(1-\eta_w\mu)^q =\frac{(G_r+\lambda^{(p)}G_c)\big(1-(1-\eta_w\mu)^K\big)}{\mu}.$}
\end{equation}
Differencing the $\lambda^{(p)}G_c/\mu$ term across rounds with $|\Delta\lambda^{(p)}|\le\eta_\lambda C_c$ gives the shock-damping bound.
\end{proof}

\textbf{Synthesis.} The three results compose along the two timescales. By Proposition~\ref{prop:policy} the slow timescale moves the constrained-optimal target at per-round rate $O(\eta_\lambda C_c^2/(1-\gamma))$; by Proposition~\ref{prop:federated} each federated round contracts the local drift toward the target while absorbing the multiplier shock with magnitude $O(G_c\eta_\lambda C_c/\mu)$. The per-round primal tracking error is therefore $O(\eta_\lambda)+\rho_0$, where $\rho_0\ge 0$ is an irreducible floor set by non-convex inexact policy optimization, critic approximation, and bounded function approximation; the underlying CMDP has zero duality gap in occupancy-measure space~\cite{altman1999cmdp}, so $\rho_0$ reflects an optimization gap, not a duality gap. This tracking error governs the residual $\bar\rho$ of Proposition~\ref{prop:constraint}, giving an average positive constraint violation of $O(1/\sqrt{J})+\rho_0/\lambda^{\max}$: the dual-controllable part decays at the $O(1/\sqrt{J})$ rate, and the residual floor is shaped jointly by the dual stepsize $\eta_\lambda$ and the proximal coefficient $\mu$, a coupling that arises only under CRAFT's combined constrained, federated, augmented-reward design. Empirical training stability and convergence are examined in Section~\ref{sec:experiments}.

\textbf{Complexity Analysis.} The per-slot computational complexity of CRAFT is dominated by the forward and backward passes of the Actor and Critic networks. Let $|\boldsymbol{w}|$ and $|\boldsymbol{w}^c|$ denote the number of parameters in the Actor and Critic networks, respectively, and $h$ the embedding dimension. The per-vehicle per-slot complexity is $O(K_n^2 h + |\boldsymbol{w}|)$, where the first term accounts for self-attention over subtask embeddings. The per-ES per-slot complexity is $O(|\mathcal{N}_m(t)|^2 h + |\mathcal{N}_m(t)| \cdot |\boldsymbol{w}^c|)$, where the first term accounts for self-attention over vehicle embeddings. FL aggregation is performed every $K$ time slots, with intra-ES aggregation complexity $O(|\mathcal{N}_m^{\mathrm{FL}}| \cdot |\boldsymbol{w}|)$ and inter-ES aggregation complexity $O(M \cdot |\boldsymbol{w}|)$, both linear. CKKS encryption and decryption scale linearly with the number of Actor parameters, requiring $O(|\boldsymbol{w}|)$ encryption/decryption operations. The ML-DSA signing and verification complexity is $O(1)$ with respect to payload size, since both operations act on a fixed-size hash digest.

\textbf{Communication Overhead Analysis.} All messages transmitted between entities carry an ML-DSA digital signature (2,420--4,627 bytes depending on the security level~\cite{fips204}), which is small relative to task offloading data volumes. The communication overhead in the decision loop is minimal: each vehicle uploads a single scalar ($\tilde{r}_n(t)$) per slot and the ES returns a single scalar (the advantage value). In the aggregation loop, each vehicle uploads encrypted Actor parameters with a communication volume of $\zeta \cdot |\boldsymbol{w}|$, where $\zeta > 1$ denotes the CKKS ciphertext expansion factor, i.e., the ratio of ciphertext size to plaintext size, and the inter-ES backhaul transmission overhead is $M \cdot \zeta \cdot |\boldsymbol{w}|$. Since FL aggregation occurs only once every $K$ time slots, its communication overhead is manageable compared to the data transmission volume of task offloading itself.

\section{Performance Evaluation}
\label{sec:experiments}

This section evaluates CRAFT's performance through extensive experiments on a VM-based platform with realistic vehicular workloads.

\subsection{Experimental Setup}

\textbf{Experimental Platform.} We implement and evaluate CRAFT on a VM-based platform, where each vehicle and ES runs on a dedicated VM. To reflect heterogeneous real-world vehicle computing, vehicle VMs are configured with reference to mainstream intelligent vehicles (e.g., Tesla Model 3\footnote{\url{https://www.tesla.com/ownersmanual/model3/en_us/GUID-2CB60804-9CEA-4F4B-8B04-09B991368DC5.html}}, Volvo EX90\footnote{\url{https://www.nvidia.com/content/dam/en-zz/Solutions/gtcf21/jetson-orin/nvidia-jetson-agx-orin-technical-brief.pdf}}, BMW iX3\footnote{\url{https://www.qualcomm.com/automotive/solutions/snapdragon-ride}}), each with 4--12 cores and 4--32~GB RAM. Each ES VM has 16 cores and 64~GB RAM, matching servers on roadside units. Inter-VM delays follow the channel model in Section~\ref{sec:system_model}.

\begin{figure}[t]
    \centering
    \includegraphics[width=0.8\columnwidth]{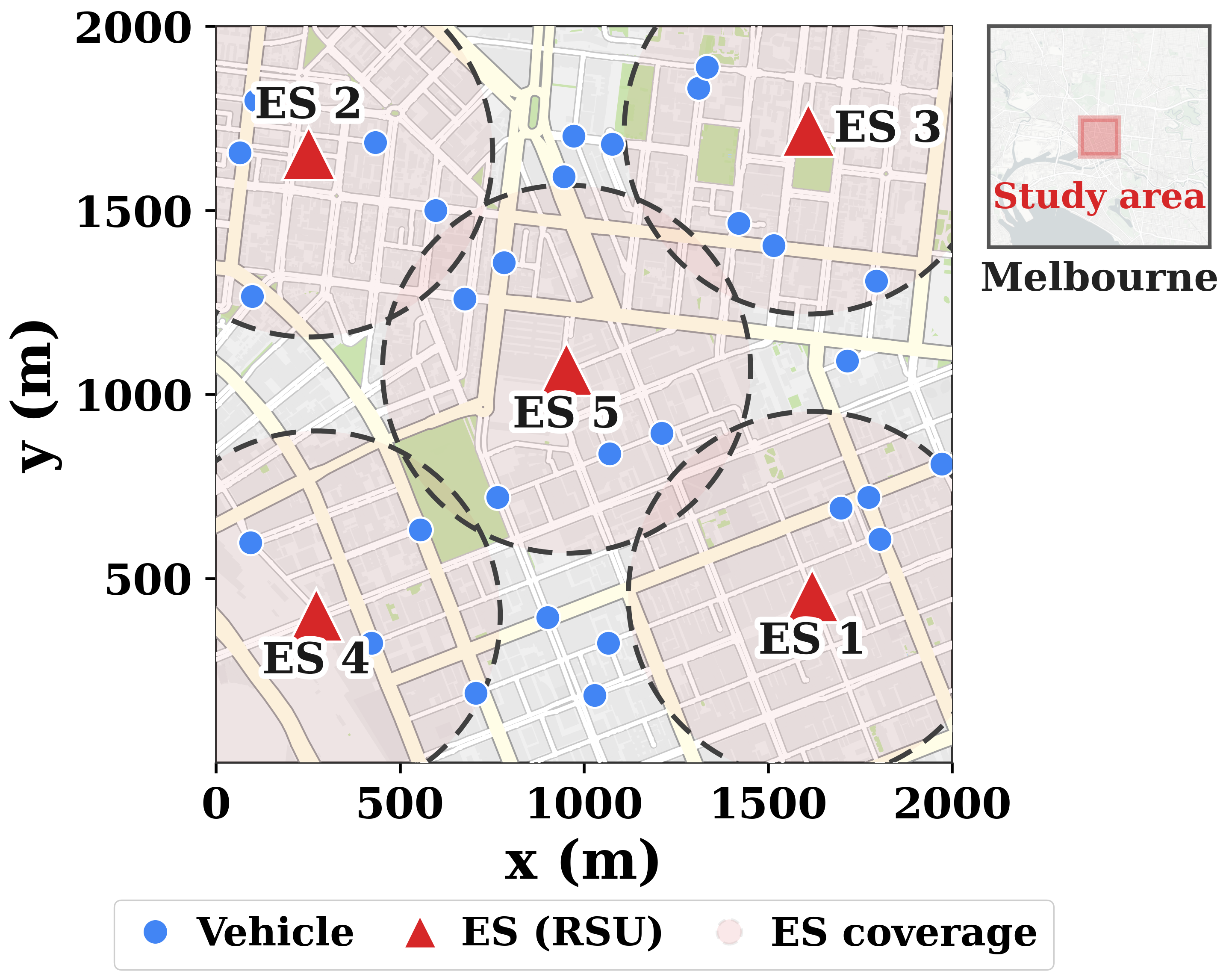}
    \caption{Road scenario extracted from the Melbourne CBD via OpenStreetMap, with 5 ESs and 30 vehicles.}
    \label{fig:exp}
\end{figure}

\textbf{Road Scenario.} Vehicle trajectories are generated by SUMO~\cite{sumo} over a realistic $\sim$2~km~$\times$~2~km road network extracted from the Melbourne CBD via OpenStreetMap\footnote{\url{https://www.openstreetmap.org}}, with $M = 5$ ESs and $N = 30$ vehicles by default (Fig.~\ref{fig:exp}). Vehicle speeds range from 30 to 50~km/h per the local speed regulations\footnote{\url{https://www.melbourne.vic.gov.au/road-safety-and-speed-limits}}. The slot duration $\tau = 1$~s limits displacement to $\le$14~m per slot ($<$3\% of the coverage radius), supporting the quasi-static channel assumption.

\textbf{Workload Design.} We develop three representative vehicular DAG tasks~\cite{gong2023edge_its, katare2023edge_ad}: (i) autonomous driving perception~\cite{chen2024e2ead}, (ii) augmented reality navigation~\cite{liu2019edge_ar}, and (iii) in-vehicle voice assistant~\cite{prabhavalkar2024e2easr}. Each comprises interdependent subtasks run as real computation kernels (matrix operations, Fast Fourier Transform (FFT), image processing, compression, etc.) on the VMs. Heterogeneity arises from parameterized inputs: same-type subtasks receive varying input sizes across slots (e.g., different image resolutions or audio durations), naturally varying execution time and output volume. All delays and energy values are measured on the VMs.

\textbf{Communication Parameters.} The system bandwidth is $B = 20$~MHz, vehicle transmit power $P_n^{\mathrm{tx}} = 0.3$~W, ES transmit power $P_m^{\mathrm{ES}} = 2$~W, path loss exponent $\alpha = 3$, reference channel gain $h_0 = -30$~dB, noise power spectral density $N_0 = -174$~dBm/Hz, ES coverage radius $R_m = 500$~m, and backhaul transmission rate $r^{\mathrm{wire}} = 1$~Gbps. These parameters follow 3rd Generation Partnership Project Technical Report (3GPP TR) 36.885~\cite{3gpp_v2x} and existing VEC literature~\cite{10779474, 11085134, 11096732}.

\textbf{Baselines.} We compare CRAFT against the following three state-of-the-art methods:

\begin{itemize}
    \item \textbf{DPTORA}~\cite{10779474}: A distributed task offloading and resource allocation scheme based on the Generalist Pursuit Learning Algorithm (GPLA).
    \item \textbf{Fed-MADDPG}~\cite{11085134}: A multi-agent scheme based on MADDPG combined with FL for cross-ES knowledge sharing.
    \item \textbf{LF-MAPPO}~\cite{11096732}: A multi-agent scheme based on MAPPO with FL for inter-vehicle collaborative learning.
\end{itemize}

\textbf{Hyperparameter Tuning.} For fair comparison, the key hyperparameters of CRAFT and all baselines are tuned via grid search. In CRAFT, both the Actor and Critic adopt fully connected embedding layers ($h=256$) with single-head self-attention and a learnable aggregation token to handle variable-length inputs. CKKS is implemented with the TenSEAL library\footnote{\url{https://github.com/OpenMined/TenSEAL}} (ring dimension $N_{\mathrm{poly}}=8192$, scale $2^{40}$), and the ML-DSA-44 signatures (National Institute of Standards and Technology (NIST) security level~2) with the liboqs library\footnote{\url{https://github.com/open-quantum-safe/liboqs}}. Full settings are provided in Appendix Table~A2; each experiment is repeated 10 times and averaged.

\subsection{Convergence Performance}

\begin{figure*}[t]
    \centering
    \includegraphics[width=0.55\textwidth, height=0.6cm]{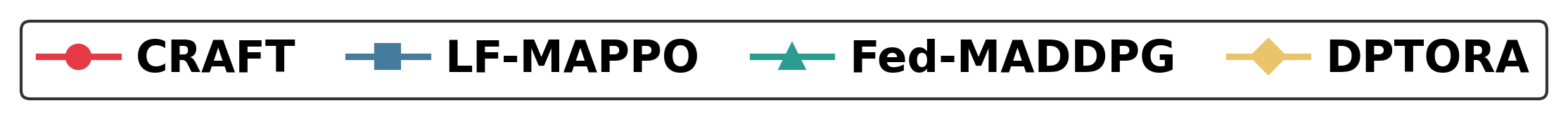}\\[-1.5pt]
    \begin{subfigure}[b]{0.24\textwidth}
        \includegraphics[width=\textwidth, height=2.5cm]{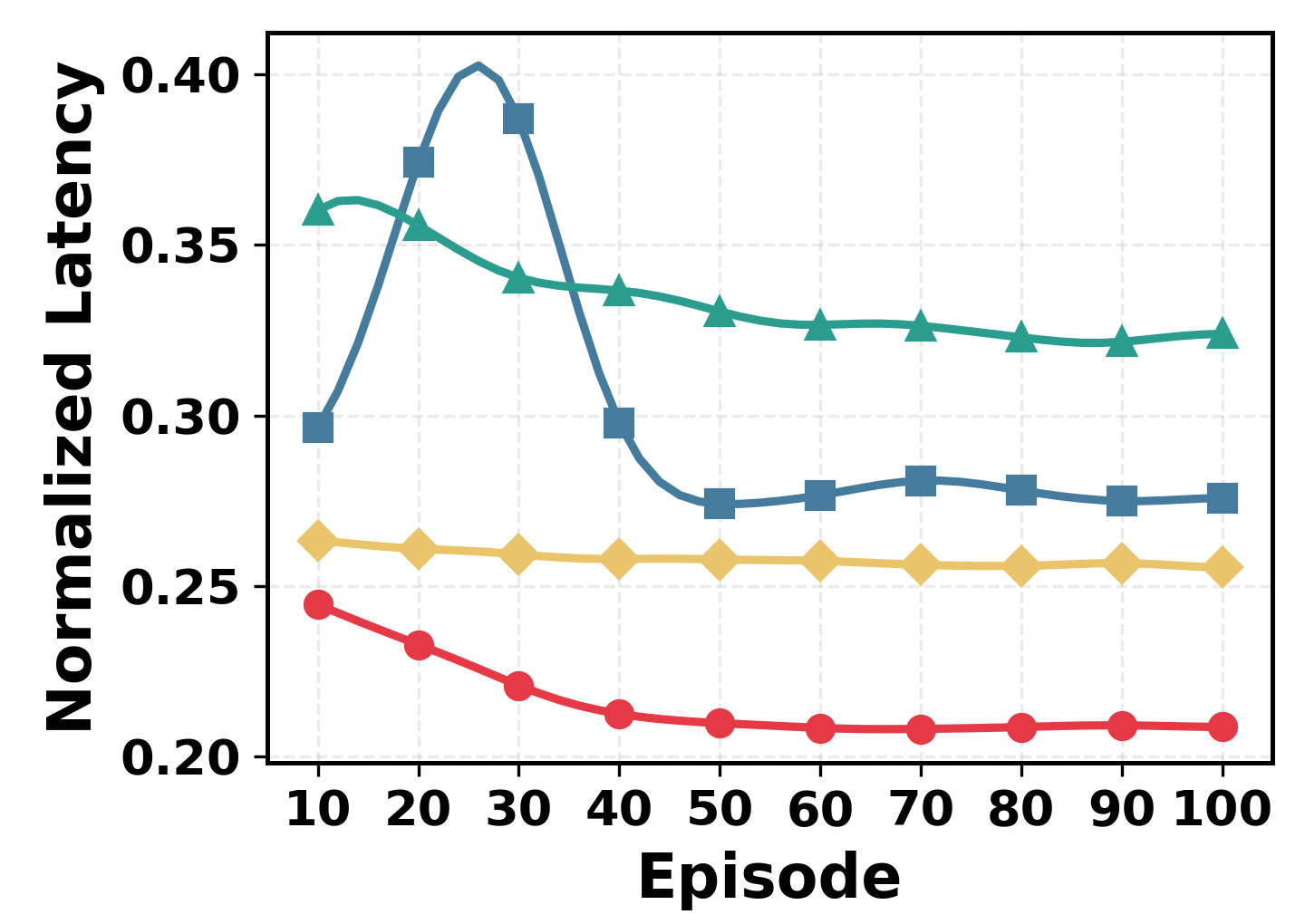}
        \caption{Normalized Latency}\label{fig:conv_latency}
    \end{subfigure}\hfill
    \begin{subfigure}[b]{0.24\textwidth}
        \includegraphics[width=\textwidth, height=2.5cm]{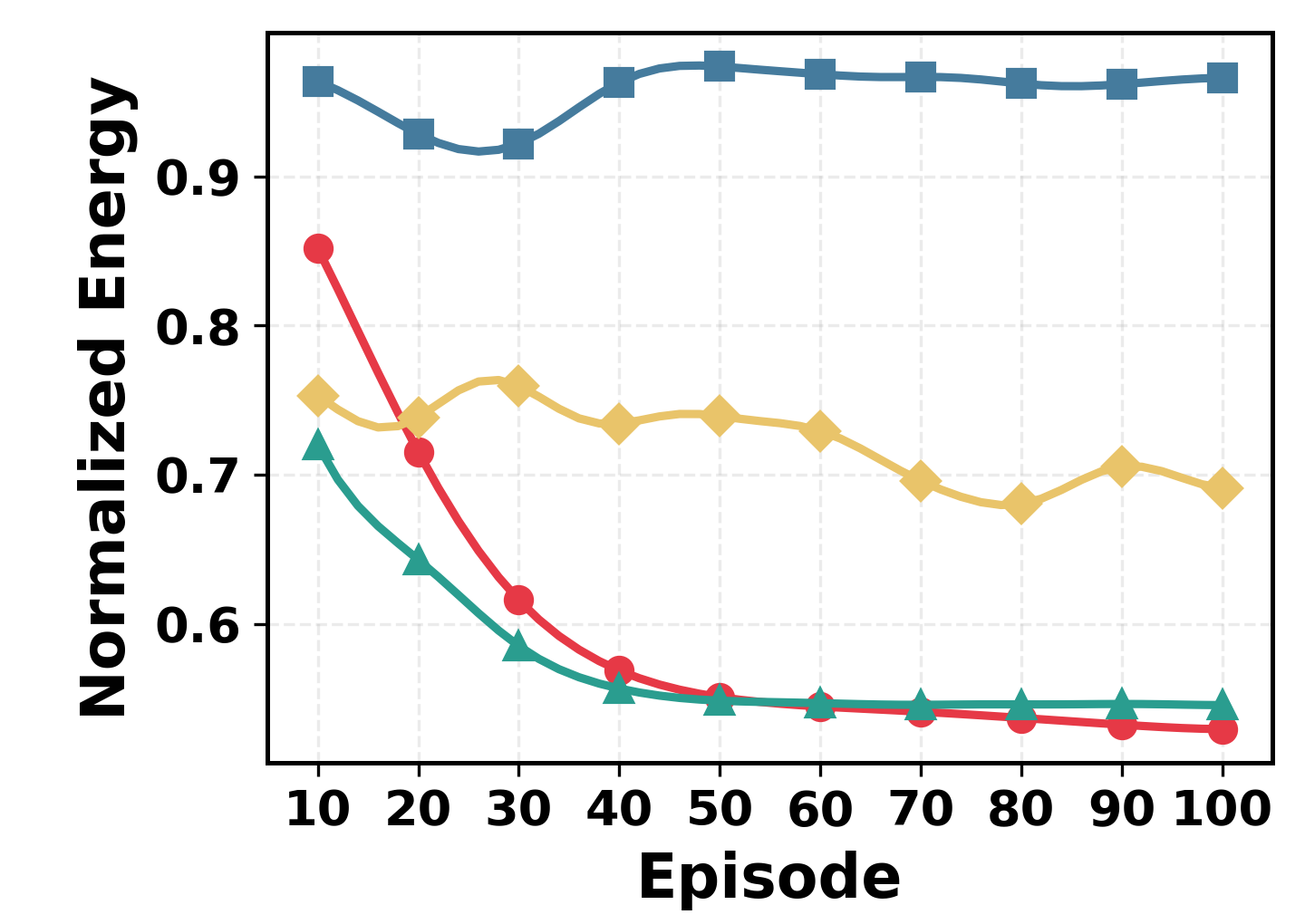}
        \caption{Normalized Energy}\label{fig:conv_energy}
    \end{subfigure}\hfill
    \begin{subfigure}[b]{0.24\textwidth}
        \includegraphics[width=\textwidth, height=2.5cm]{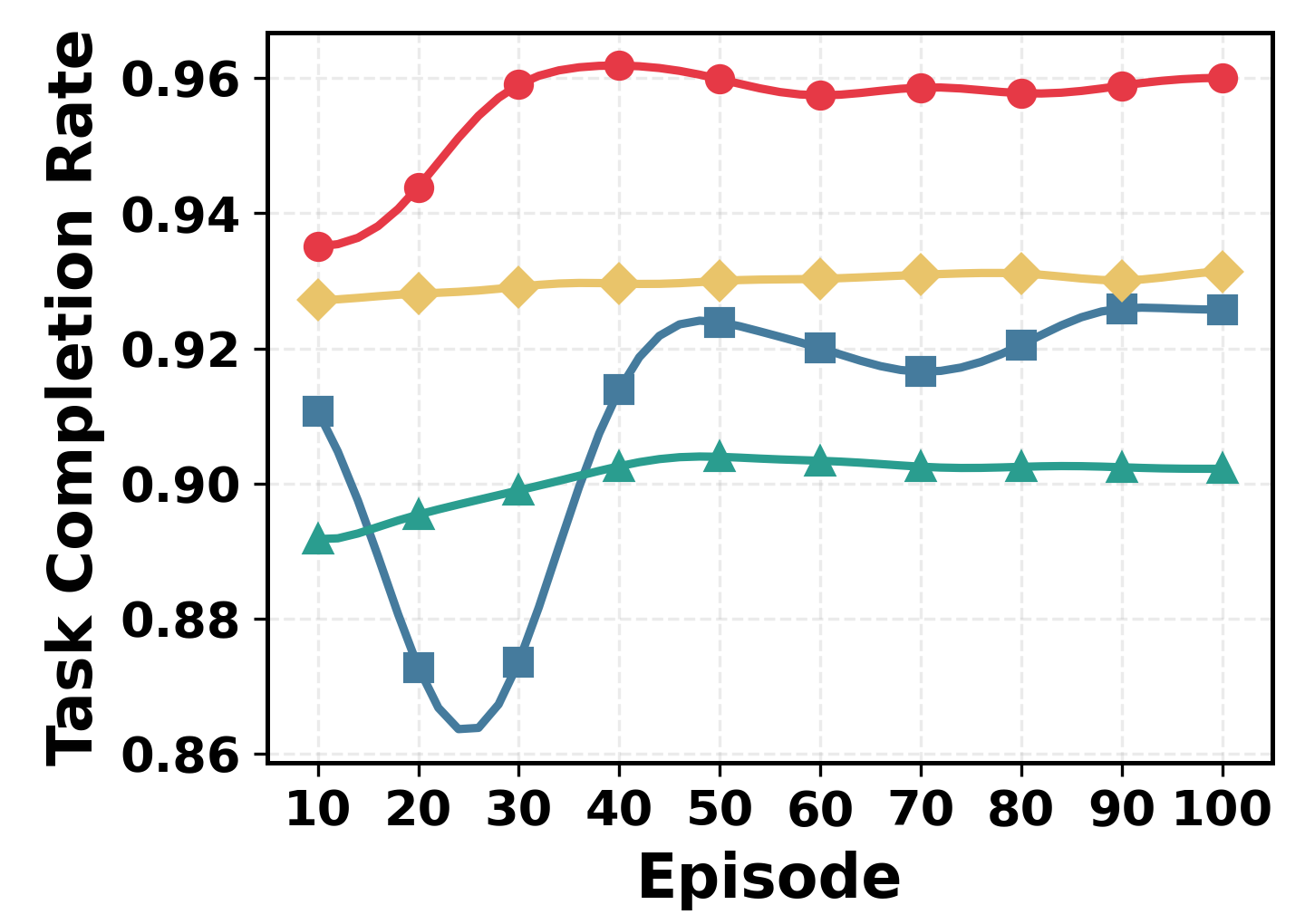}
        \caption{Task Completion Rate}\label{fig:conv_cr}
    \end{subfigure}\hfill
    \begin{subfigure}[b]{0.24\textwidth}
        \includegraphics[width=\textwidth, height=2.5cm]{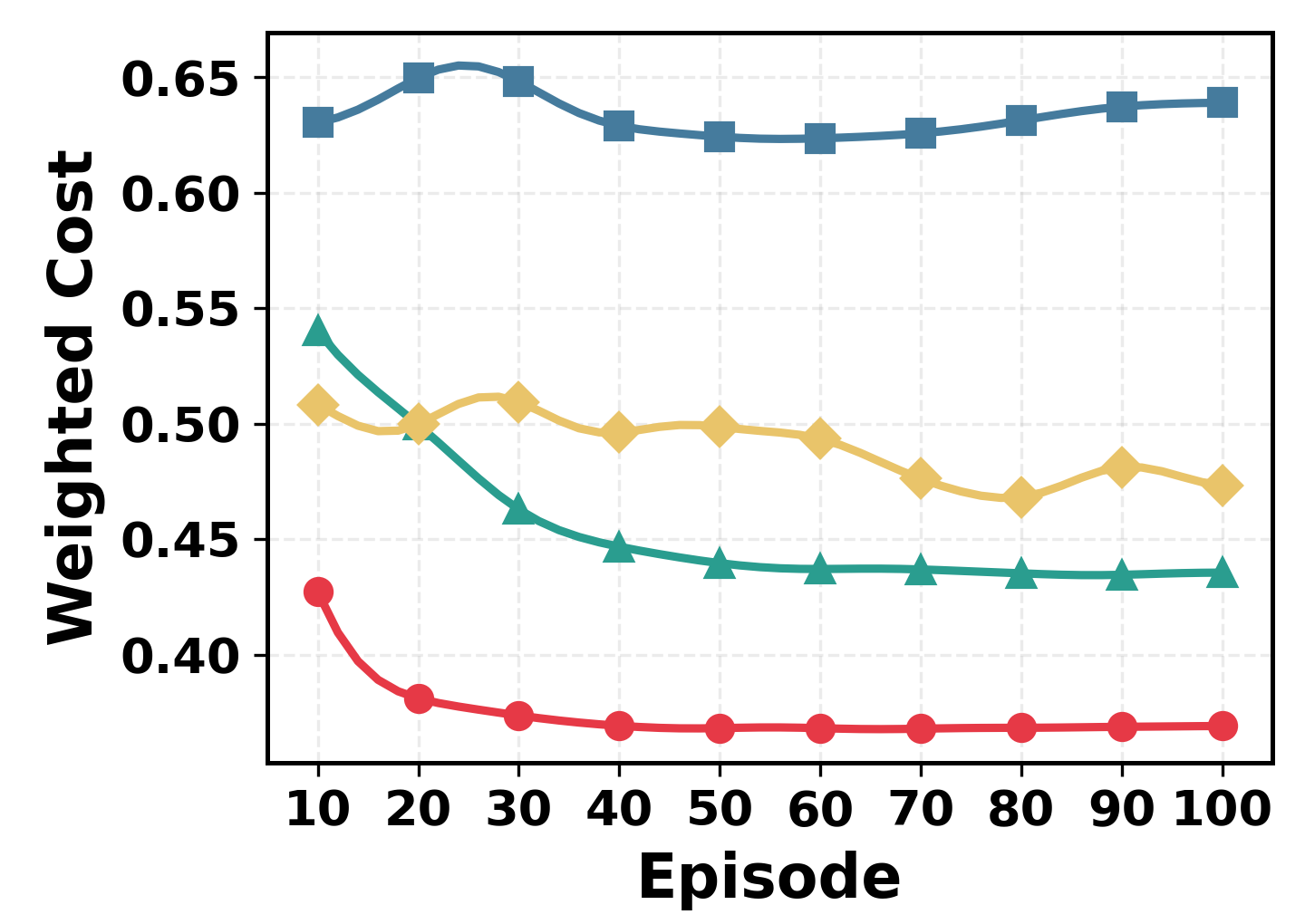}
        \caption{Weighted Cost}\label{fig:conv_cost}
    \end{subfigure}
    \caption{Convergence performance of all methods.}
    \label{fig:convergence}
\end{figure*}

Fig.~\ref{fig:convergence} shows the convergence of all methods over 100 episodes in normalized latency, energy, task completion rate, and weighted cost. CRAFT converges fastest, with all metrics stabilizing around episode 50 at the lowest values of 0.21 (latency), 0.52 (energy), and 0.37 (cost). Its completion rate settles at about 0.96, closely tracking the threshold $\theta = 0.95$ and validating the online Lagrangian adjustment.

The baselines each reveal a structural limitation. LF-MAPPO is unstable early on (latency 0.40, completion rate 0.86) under unconstrained exploration; although its latency drops to 0.27, its energy never improves, as it favors local execution to avoid communication delay. Fed-MADDPG attains competitive energy via federated strategy sharing but converges to a high latency of 0.32, limited by the poor exploration of deterministic policy gradients in the hybrid action space. DPTORA, driven by learning-automata updates, shows small fluctuations but weak optimization across all metrics, reflecting insufficient policy expressiveness. Overall, CRAFT reduces weighted cost by about 16\%, 20\%, and 42\% over Fed-MADDPG, DPTORA, and LF-MAPPO, respectively.

\subsection{Ablation Study}

\begin{figure}[t]
    \centering
    \includegraphics[width=0.7\columnwidth, height=0.4cm]{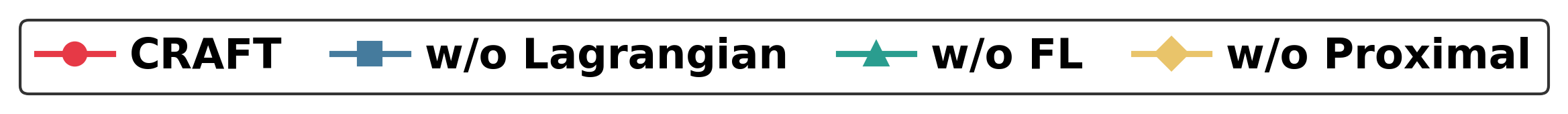}\\[-1.5pt]
    \begin{subfigure}[b]{0.48\columnwidth}
        \includegraphics[width=\textwidth, height=2.5cm]{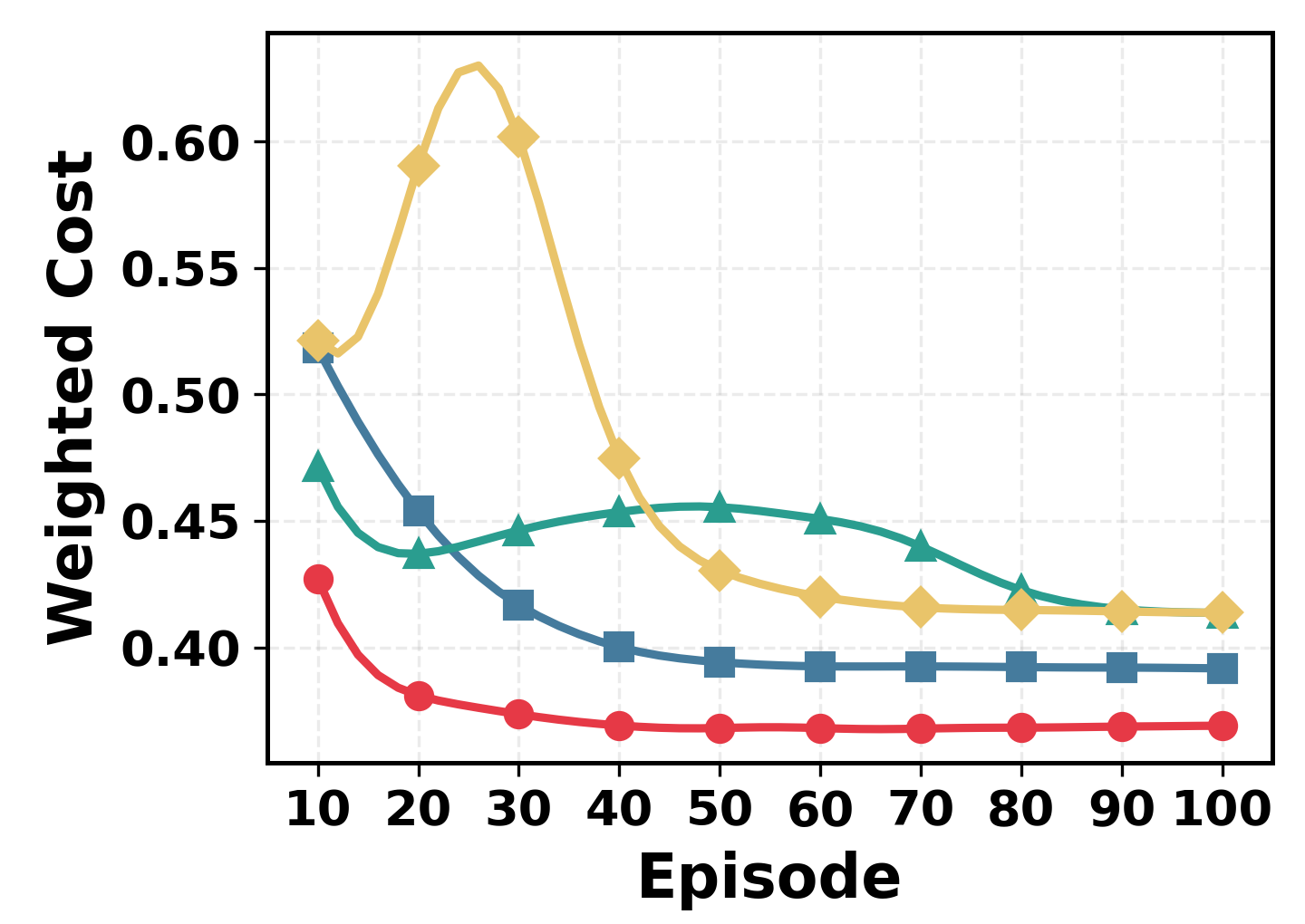}
        \caption{Weighted Cost}
        \label{fig:ablation_cost}
    \end{subfigure}
    \hfill
    \begin{subfigure}[b]{0.48\columnwidth}
        \includegraphics[width=\textwidth, height=2.5cm]{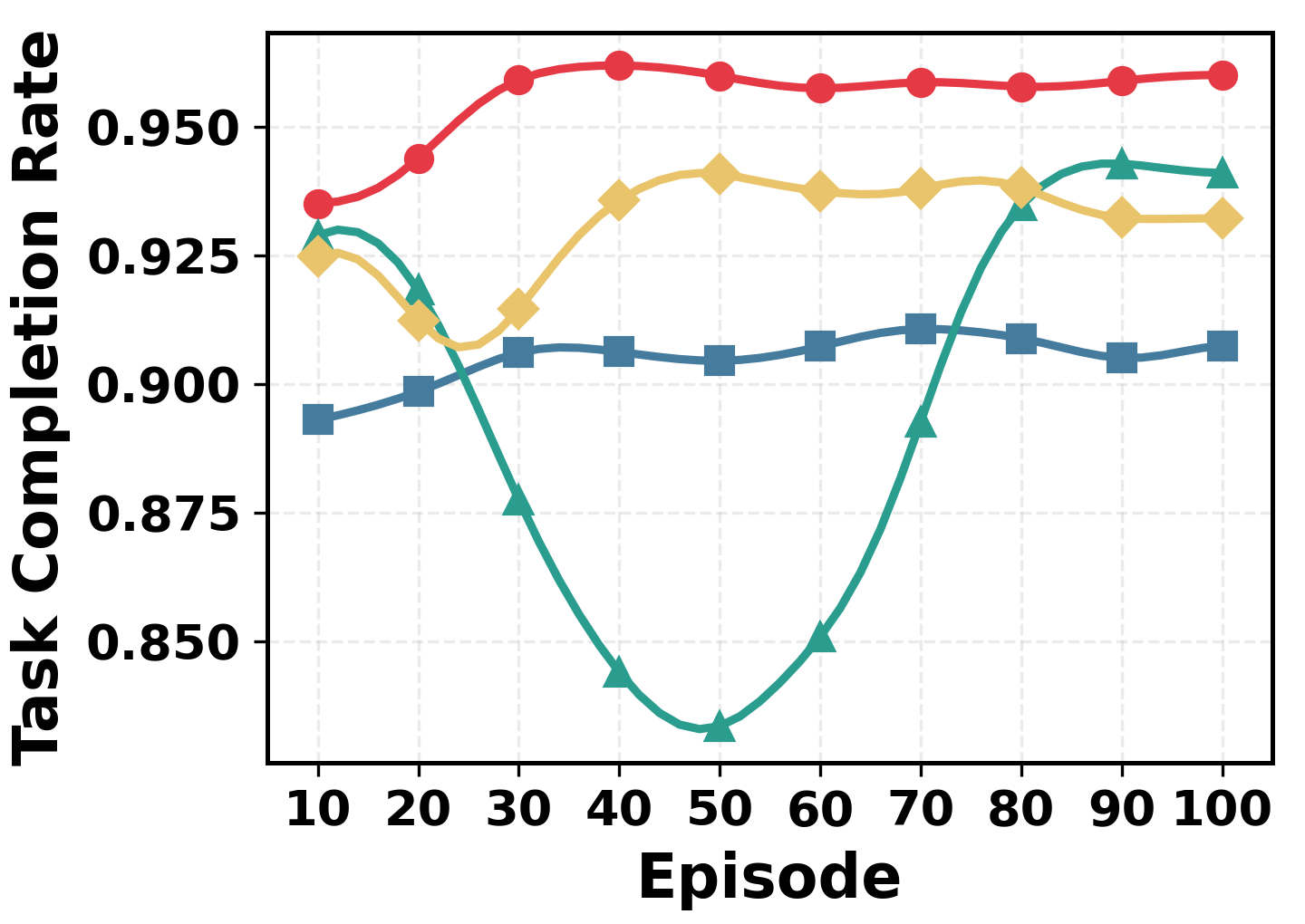}
        \caption{Task Completion Rate}
        \label{fig:ablation_cr}
    \end{subfigure}
    \caption{Ablation study on the core components of CRAFT.}
    \label{fig:ablation}
\end{figure}

Fig.~\ref{fig:ablation} isolates each core component by removing the Lagrangian mechanism, federated aggregation, and proximal regularization, respectively. Removing the Lagrangian mechanism (w/o Lagrangian) raises the weighted cost only mildly to 0.39, but the completion rate settles at just 0.91: without the dynamic penalty, the policy drifts toward lower latency and energy while neglecting task success, confirming the multiplier adjustment is indispensable for constraint satisfaction. Removing federated aggregation (w/o FL) forces each vehicle to train on local experience alone; its completion rate plunges to 0.83 mid-training and only fluctuates back to 0.94, while the cost rises to 0.41, since region-specific local optima cause severe policy mismatch at ES boundaries. Removing proximal regularization (w/o Proximal) reduces aggregation to plain FedAvg and yields the most violent early-training fluctuations, peaking at 0.63: under the strongly non-IID data across regions, unconstrained local drift destabilizes the aggregated model, and although the cost eventually reaches 0.41, the completion rate stays at only 0.93.

\subsection{Scalability Analysis}

\begin{figure}[t]
    \centering
    \includegraphics[width=0.7\columnwidth, height=0.4cm]{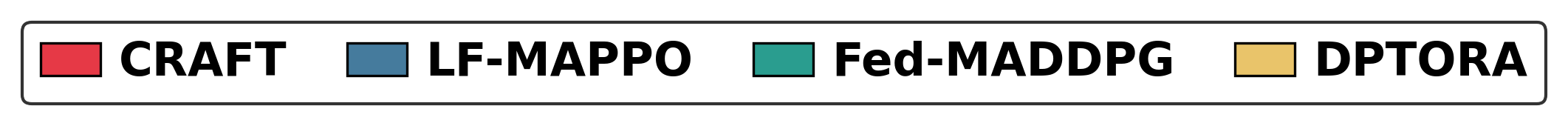}\\[-1.5pt]
    \begin{subfigure}[t]{0.48\columnwidth}
        \includegraphics[width=\textwidth, height=2.5cm]{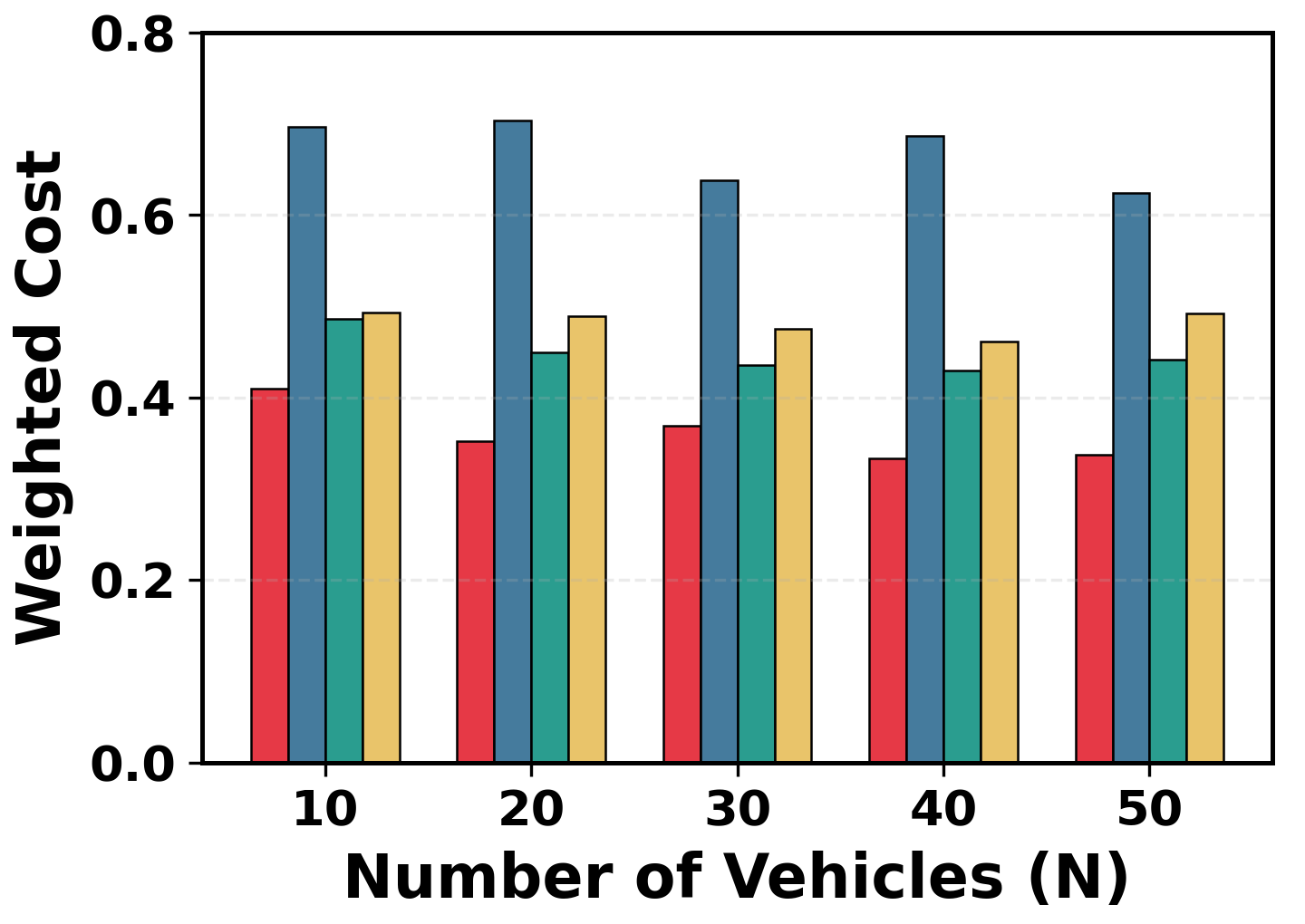}
        \caption{Varying number of vehicles}
        \label{fig:vary_n}
    \end{subfigure}
    \hfill
    \begin{subfigure}[t]{0.48\columnwidth}
        \includegraphics[width=\textwidth, height=2.5cm]{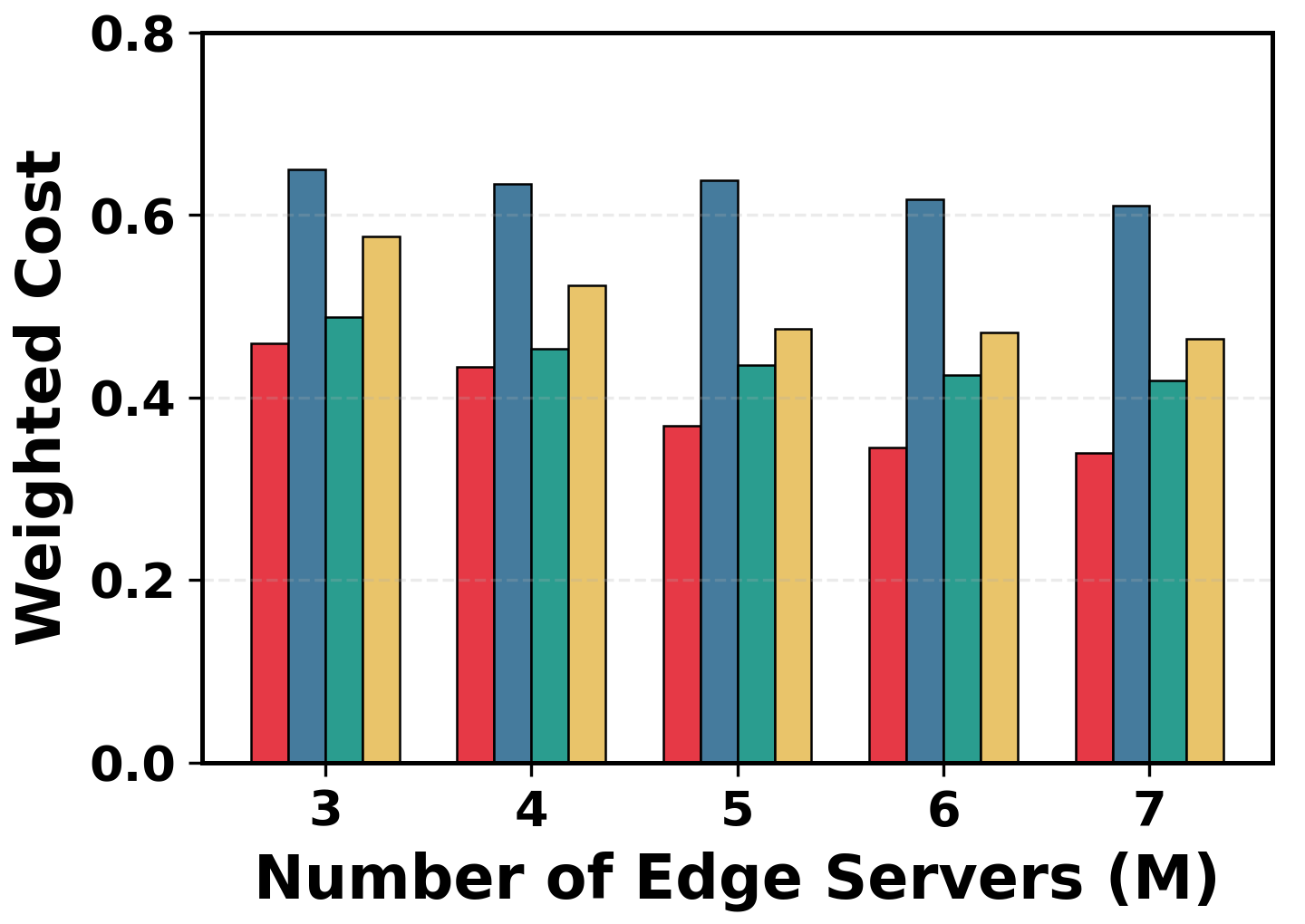}
        \caption{Varying number of ESs}
        \label{fig:vary_m}
    \end{subfigure}
    \caption{Impact of system scale on performance.}
    \label{fig:scalability}
\end{figure}

Fig.~\ref{fig:scalability} examines the impact of the number of vehicles $N$ and ESs $M$ on weighted cost. As $N$ grows from 10 to 50, CRAFT's cost falls from 0.41 to 0.34 (with a slight rise at $N=30$ from intensified resource contention), as a larger pool of participants strengthens the generalization of the federated global model. Increasing $M$ from 3 to 7 yields an even sharper drop (0.46 to 0.34, $\sim$26\%), since extra ESs add edge capacity and expand each vehicle's offloading targets, letting the constraint-aware policy find better latency-energy trade-offs. LF-MAPPO stays insensitive to scale in both cases (0.61--0.71), confirming its energy-optimization bottleneck, while Fed-MADDPG and DPTORA improve but less than CRAFT. CRAFT thus attains the lowest cost in every configuration, with its margin widening as the system scales.

\subsection{Security Analysis}

\begin{figure}[t]
    \centering
    \includegraphics[width=0.85\columnwidth, height=4cm]{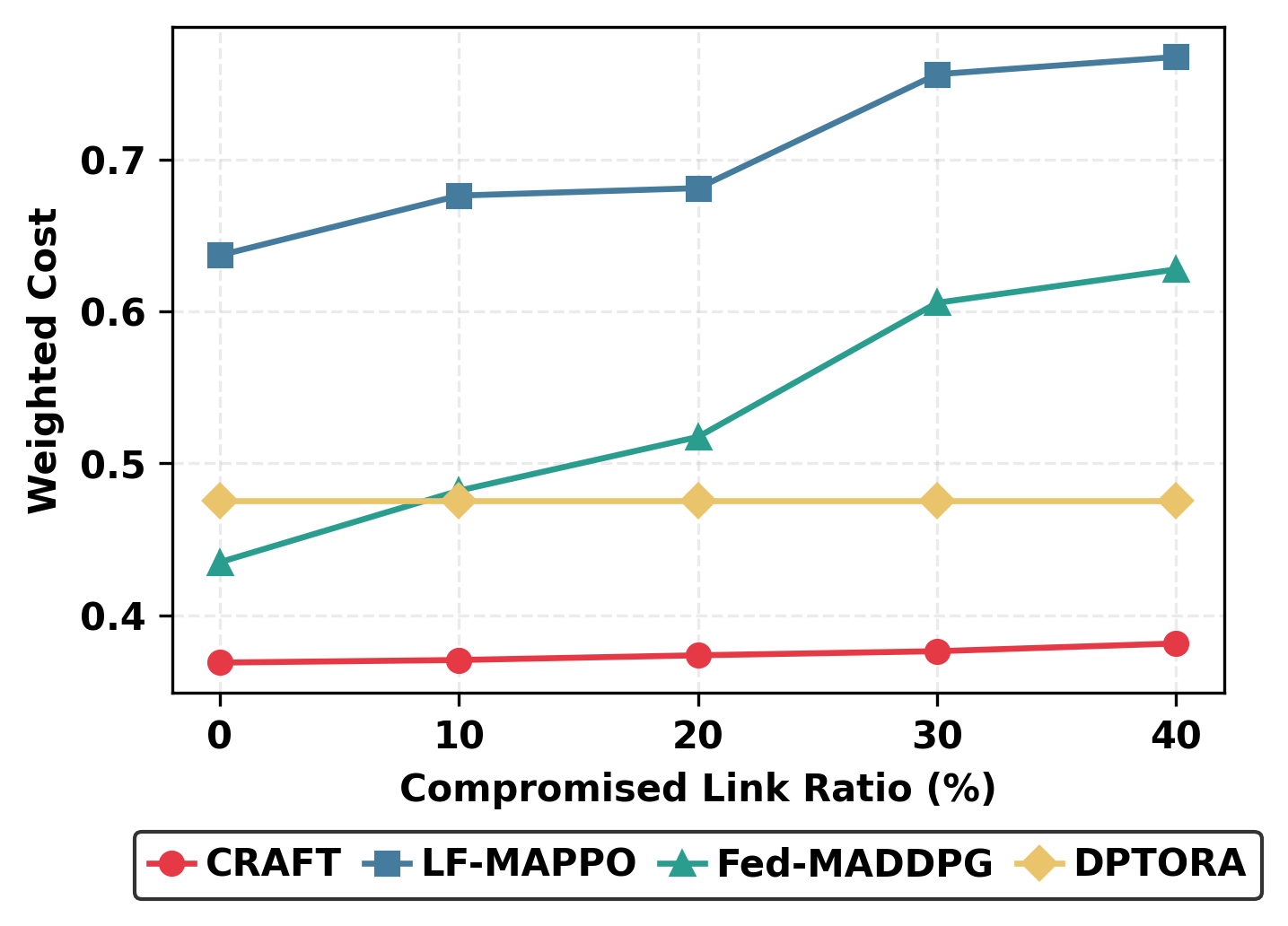}
    \caption{Impact of compromised link ratio on performance.}
    \label{fig:security}
\end{figure}

\textbf{Threat Model.} We consider an adversarial setting where a fraction of the V2I links are compromised, allowing attackers to intercept and tamper with the transmitted parameters; this is especially relevant to VEC, whose broadcast wireless channels are inherently susceptible to man-in-the-middle attacks. While an attacker may inject an arbitrary payload $\mathbf{z}'$, we adopt two representative attacks, each compromised link independently applying one with equal probability: (i) Gaussian noise injection, $\mathbf{z} \!\to\! \mathbf{z} + \boldsymbol{\delta}$ with $\boldsymbol{\delta} \sim \mathcal{N}(\mathbf{0}, \sigma_{\mathrm{atk}}^2 \mathbf{I})$; and (ii) sign-flipping, $\mathbf{z} \!\to\! -\mathbf{z}$. CRAFT's integrity protection holds against arbitrary tampering, not just these two. We evaluate integrity only, as confidentiality follows from the provable security of CKKS: without the decryption key, no entity can recover the parameters from the ciphertext~\cite{cheon2017ckks}.

As the compromised ratio rises from 0\% to 40\% (Fig.~\ref{fig:security}),  CRAFT degrades only marginally, since tampered messages fail ML-DSA verification and are discarded before aggregation. By contrast, Fed-MADDPG degrades from 0.44 to 0.63 ($\sim$43\%) and LF-MAPPO from 0.64 to 0.77 ($\sim$20\%). DPTORA is naturally immune for lack of inter-agent parameter sharing, but its baseline cost remains uncompetitive.

\subsection{Security Overhead Analysis}

\begin{figure}[t]
    \centering
    \begin{subfigure}[t]{0.48\columnwidth}
        \includegraphics[width=\textwidth, height=3cm]{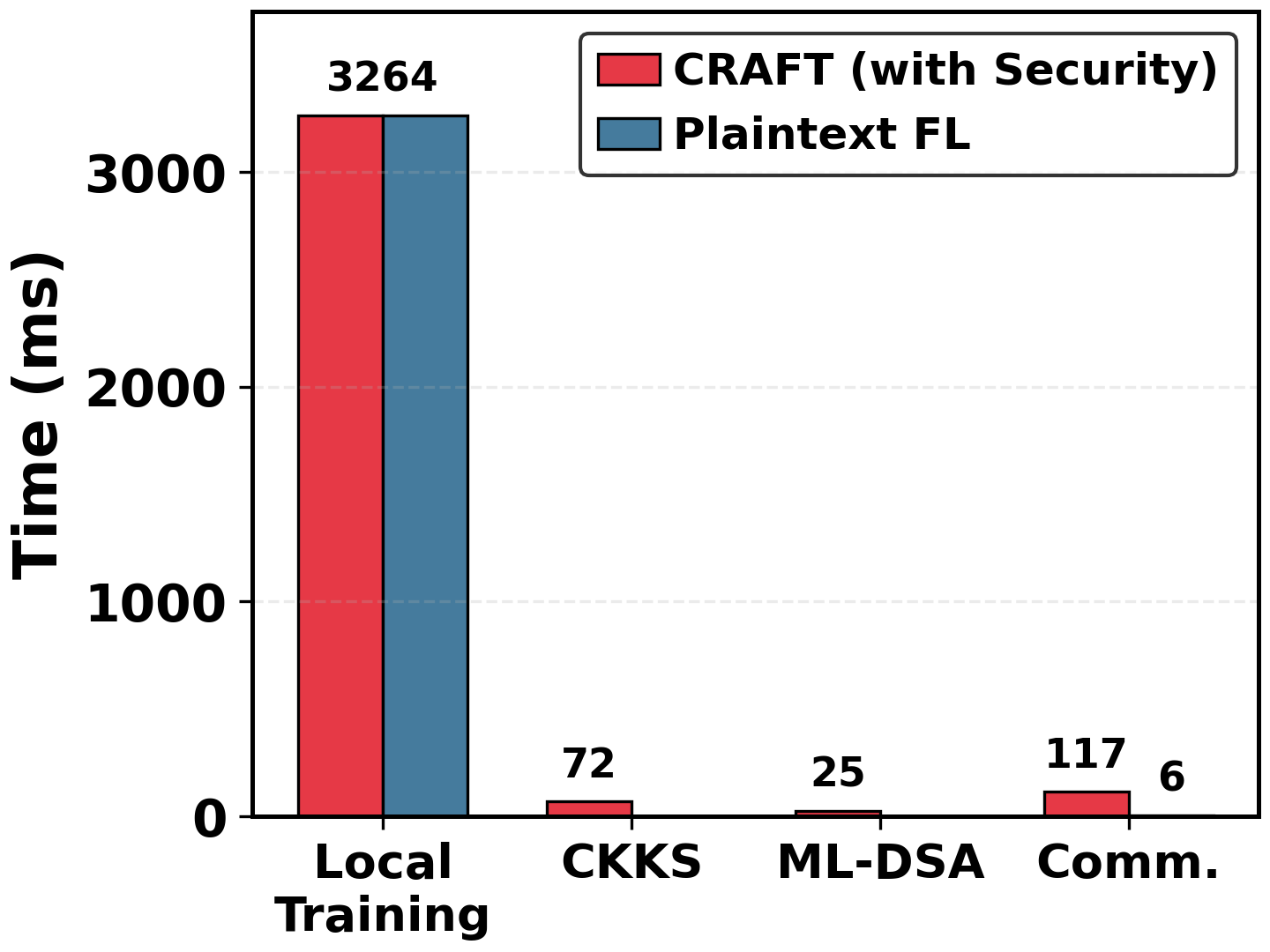}
        \caption{Per-round breakdown}
        \label{fig:fhe_breakdown}
    \end{subfigure}
    \hfill
    \begin{subfigure}[t]{0.48\columnwidth}
        \includegraphics[width=\textwidth, height=3cm]{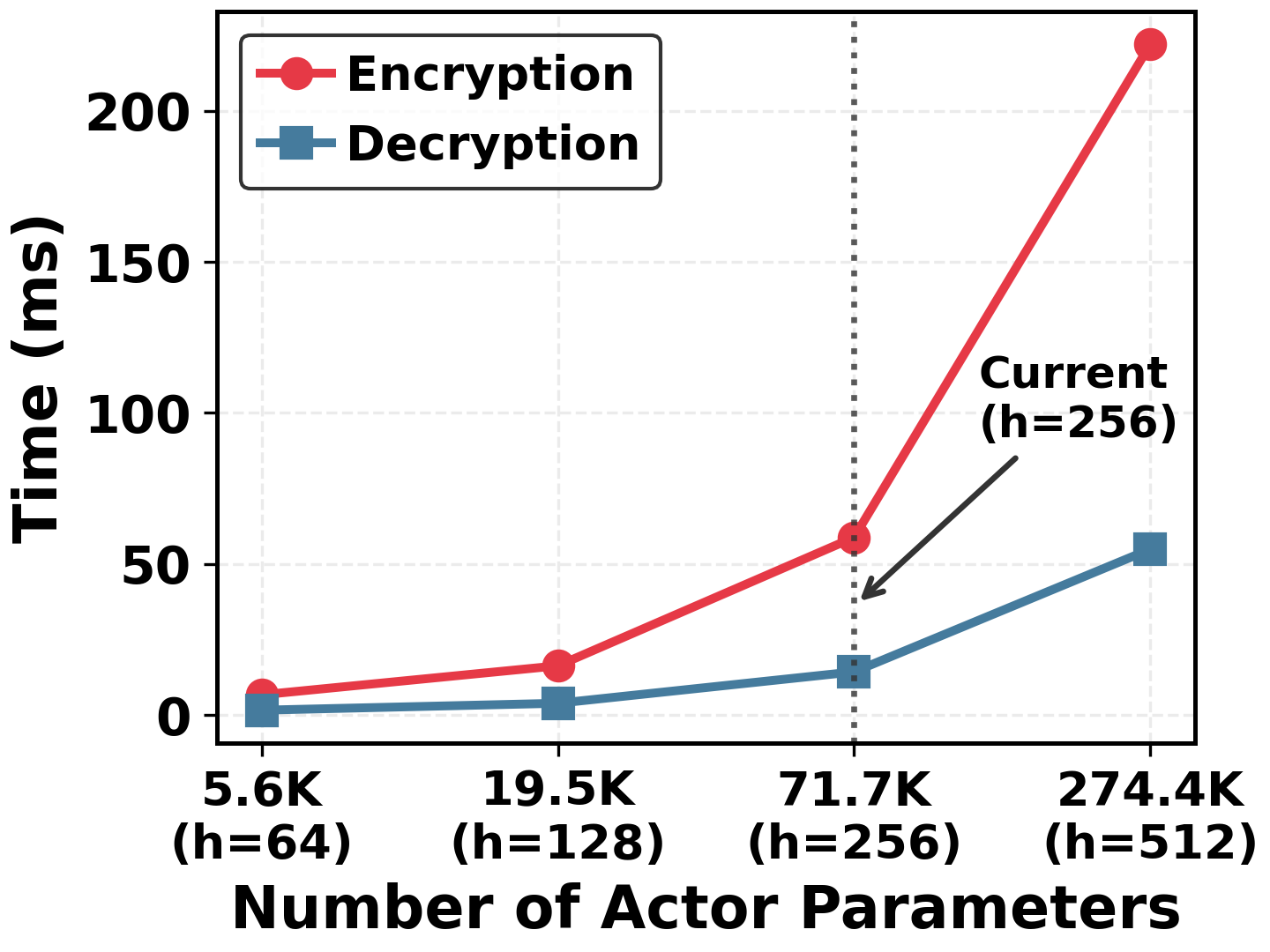}
        \caption{Scaling with model size}
        \label{fig:fhe_scaling}
    \end{subfigure}
    \caption{Security overhead analysis.}
    \label{fig:fhe}
\end{figure}

Fig.~\ref{fig:fhe} quantifies the security overhead. In the per-round breakdown (Fig.~\ref{fig:fhe_breakdown}), local training time is identical with and without security (3264~ms). CKKS encryption and decryption add 72~ms, ciphertext expansion raises communication time from 6~ms to 117~ms, and ML-DSA signing and verification, accumulated over the $K=20$ per-slot exchanges and the aggregation round, contribute about 25~ms. The total overhead of 208~ms accounts for 6.4\% of the training time, indicating a modest amortized security cost relative to local training. Fig.~\ref{fig:fhe_scaling} shows the scaling of the dominant component, CKKS. At $h=256$ (71.7K parameters), encryption and decryption take 58~ms and 14~ms; even at $h=512$ (274.4K parameters) they remain practical at about 220~ms and 55~ms. ML-DSA overhead is independent of model size, operating on a fixed-size hash digest, and since FL aggregation runs only periodically, the CKKS and communication costs are further amortized and never become a bottleneck.

\section{Conclusions and Future Work}
\label{sec:conclusion}
This paper proposes CRAFT for joint offloading and resource allocation of DAG-structured tasks in VEC. CRAFT models the problem as a CMDP solved through four cooperating components: CTDE-based multi-agent decision-making, Lagrangian relaxation for constraint handling, two-tier federated aggregation with dual-role proximal regularization, and post-quantum security combining CKKS encryption for confidentiality with ML-DSA signatures for integrity. Experiments on a VM-based platform with a realistic Melbourne CBD network show that CRAFT lowers weighted cost by about 16\%, 20\%, and 42\% over three state-of-the-art baselines, holds the completion rate (96\%) near the constraint threshold, and incurs only 6.4\% security overhead.

Future work proceeds along three directions: (i) incorporating Vehicle-to-Vehicle (V2V) cooperative offloading to exploit idle resources on nearby vehicles via joint V2I-V2V scheduling; (ii) adaptive multi-objective optimization that adjusts the latency-energy weights to real-time traffic and application demands; and (iii) attaching zero-knowledge proofs of well-formedness to CKKS-encrypted updates, so that maliciously tampered parameters from compromised-but-certified vehicles can be filtered in the ciphertext domain.

\ifCLASSOPTIONcaptionsoff
  \newpage
\fi

\bibliographystyle{IEEEtran}
\bibliography{IEEEabrv,Bibliography}
\vspace{-6mm}

\vfill

\end{document}


\onecolumn

\appendix

\begin{table}[h]
\centering
\caption{Summary of key notations.}
\label{tab:notation}
\renewcommand{\arraystretch}{1.15}
\begin{tabular}{cl||cl}
\hline
\textbf{Notation} & \textbf{Description} & \textbf{Notation} & \textbf{Description} \\
\hline
$\mathcal{T}$, $T$, $\tau$ & Set/number of time slots; slot duration & $\Lambda_n$ & Task arrival rate of vehicle $n$ \\
$\mathcal{N}$, $N$ & Set and number of vehicles & $G_n\!=\!(\mathcal{V}_n,\mathcal{E}_n)$ & DAG of vehicle $n$ (node set, edge set) \\
$\mathcal{M}$, $M$ & Set and number of ESs & $\mathcal{K}_n$, $K_n$ & Set and number of subtasks \\
$R_m$ & Coverage radius of ES $m$ & $v_{\mathrm{src}}$, $v_{\mathrm{snk}}$ & Virtual source and sink nodes \\
$F_m$ & Computing capacity of ES $m$ (cycles/s) & $d_k$, $c_k$, $o_k$ & Input size, workload, output size of subtask $k$ \\
$\mathbf{p}_n(t)$ & Position of vehicle $n$ in time slot $t$ & $w_{i,k}$ & Data transfer weight of edge $(i,k)$ \\
$\mathbf{v}_n(t)$, $v_n(t)$ & Velocity vector and speed of vehicle $n$ & $a_k$ & Offloading decision for subtask $k$ \\
$\mathbf{q}_m$ & Position of ES $m$ & $f_n^{\mathrm{loc}}$, $\kappa$ & Local CPU frequency; capacitance coefficient \\
$d_{n,m}(t)$ & Distance between vehicle $n$ and ES $m$ & $f_{k,m}^{\mathrm{edge}}$ & Edge resources for subtask $k$ on ES $m$ \\
$\mathcal{M}_n(t)$ & Set of ESs reachable by vehicle $n$ & $D_k^{\mathrm{comp}}$, $E_k^{\mathrm{comp}}$ & Computation delay and energy of subtask $k$ \\
$\boldsymbol{\Delta}_{n,m}(t)$ & Relative position $\mathbf{p}_n(t) - \mathbf{q}_m$ & $D_{i,k}^{\mathrm{trans}}$, $E_{i,k}^{\mathrm{trans}}$ & Transfer delay and energy of edge $(i,k)$ \\
$D_{n,m}^{\mathrm{res}}(t)$ & Residual sojourn time of vehicle $n$ at ES $m$ & $\mathrm{pred}(k)$, $\mathrm{succ}(k)$ & Predecessor and successor sets of subtask $k$ \\
$B$, $b_{n,m}(t)$ & Total/allocated bandwidth & $D_k^{\mathrm{start}}$, $D_k^{\mathrm{finish}}$ & Start and finish times of subtask $k$ \\
$\mathcal{N}_m(t)$ & Set of vehicles connected to ES $m$ & $D_n^{\mathrm{DAG}}$ & Makespan of vehicle $n$'s DAG task \\
$h_{n,m}(t)$ & Channel gain between vehicle $n$ and ES $m$ & $E_n^{\mathrm{total}}$ & Total vehicle-side energy of vehicle $n$ \\
$h_0$, $\alpha$ & Reference channel gain; path loss exponent & $\mathbf{a}(t)$, $\mathbf{f}(t)$ & Offloading and resource allocation vectors \\
$\xi_{n,m}(t)$ & Rayleigh fading coefficient & $s_n(t)$ & Task success indicator of vehicle $n$ \\
$P_n^{\mathrm{tx}}$, $P_m^{\mathrm{ES}}$ & Transmit power of vehicle $n$ / ES $m$ & $D_n^{\max}$ & Deadline of vehicle $n$'s DAG task \\
$\sigma^2$ & Noise power & $\hat{D}_k(t)$ & Last interaction time of subtask $k$ at its ES \\
$r_{n,m}^{\mathrm{up}}$, $r_{n,m}^{\mathrm{down}}$, $r^{\mathrm{wire}}$ & Uplink, downlink, and backhaul rates & $\omega_1$, $\omega_2$, $\theta$ & Latency/energy weights; completion rate threshold \\
\hline
\end{tabular}
\end{table}


\begin{table}[h]
\centering
\caption{Experimental parameters.}
\label{tab:params}
\renewcommand{\arraystretch}{1.15}
\begin{tabular}{clc}
\hline
\textbf{Category} & \textbf{Parameter} & \textbf{Value} \\
\hline
\multirow{7}{*}{\textit{System}}
 & Road network area & ${\sim}2$ km $\times$ $2$ km \\
 & Number of ESs, $M$ & 5 (default) \\
 & Number of vehicles, $N$ & 30 (default) \\
 & Time slot duration, $\tau$ & 1 s \\
 & Vehicle speed & 30--50 km/h \\
 & Vehicle VM configuration & 4--12 vCPUs, 4--32 GB RAM \\
 & ES VM configuration & 16 vCPUs, 64 GB RAM \\
\hline
\multirow{8}{*}{\textit{Communication}}
 & System bandwidth, $B$ & 20 MHz \\
 & Vehicle transmit power, $P_n^{\mathrm{tx}}$ & 0.3 W \\
 & ES transmit power, $P_m^{\mathrm{ES}}$ & 2 W \\
 & Path loss exponent, $\alpha$ & 3 \\
 & Reference channel gain, $h_0$ & $-30$ dB \\
 & Noise power spectral density, $N_0$ & $-174$ dBm/Hz \\
 & ES coverage radius, $R_m$ & 500 m \\
 & Backhaul rate, $r^{\mathrm{wire}}$ & 1 Gbps \\
\hline
\multirow{15}{*}{\textit{Training}}
 & Learning rate (Actor/Critic) & $3\times10^{-4}$ \\
 & Embedding layers & 3-layer FC, $h\!=\!256$ \\
 & Attention heads & 1 \\
 & Attention dimension & 256 \\
 & Activation function & Tanh \\
 & Discount factor, $\gamma$ & 0.99 \\
 & GAE parameter, $\varsigma$ & 0.95 \\
 & Clipping parameter, $\epsilon$ & 0.2 \\
 & Experience batch size, $T_{\mathrm{batch}}$ & 64 \\
 & FedProx proximal coefficient, $\mu$ & 0.1 \\
 & FL aggregation period, $K$ & 20 \\
 & Completion rate threshold, $\theta$ & 0.95 \\
 & Lagrangian multiplier LR, $\eta_\lambda$ & 0.01 \\
 & Lagrangian multiplier upper bound, $\lambda^{\max}$ & 10 \\
 & Weighting coefficients, $\omega_1$, $\omega_2$ & 0.5, 0.5 \\
\hline
\multirow{6}{*}{\textit{Security}}
 & FHE scheme / Library & CKKS / TenSEAL \\
 & Polynomial ring dimension, $N_{\mathrm{poly}}$ & 8192 \\
 & Scaling factor & $2^{40}$ \\
 & Signature scheme / Library & ML-DSA-44 / liboqs \\
 & Signature size & 2,420 bytes \\
 & Attack noise variance, $\sigma_{\mathrm{atk}}^2$ & 0.01 \\
\hline
\end{tabular}%
\end{table}